\documentclass[a4paper]{amsproc}
\usepackage{amssymb}
\usepackage{amscd}

\usepackage{pstricks}
\usepackage[dvips]{graphicx}
\usepackage{pst-3dplot}
\usepackage[all,cmtip]{xy}

\usepackage{pst-solides3d}
\usepackage{pstricks-add}
\usepackage{amsfonts}
\usepackage{cite}
\usepackage{amsmath}
\usepackage{hyperref}

\textwidth 5.5 truein \oddsidemargin .5 truein \evensidemargin .5 truein \topmargin 0 truein \textheight 8.5 in

\theoremstyle{plain}
 \newtheorem{thm}{Theorem}[section]
 \newtheorem{prop}{Proposition}[section]
 \newtheorem{lem}{Lemma}[section]
 \newtheorem{cor}{Corollary}[section]
\theoremstyle{definition}
 
 \newtheorem{remark}{Remark}[section]

\numberwithin{equation}{section}

\newcommand{\R}{\mathbb{ R}}

\newcommand{\pr}{\mathrm{pr}}

\newcommand{\diag}{\mathrm{diag}}

\newcommand{\tr}{\mathrm{tr}}

\newcommand{\dv}{\mathrm{ div}}

\title[Spherical and planar ball bearings]{Spherical and planar ball bearings -- nonholonomic systems with invariant measures}

\author[V. Dragovi\'c, B. Gaji\'c, B. Jovanovi\'c]{\bfseries Vladimir Dragovi\'c, Borislav Gaji\'c, Bo\v zidar Jovanovi\'c}

\dedicatory{Dedicated to the memory of Professor Alexey Vladimirovich Borisov}

\address{
Department of Mathematical Sciences  \\
The University of Texas at Dallas   \\
Richardson, TX\\
USA\\
Mathematical Institute\\
Serbian Academy of Sciences and Arts\\
Belgrade\\
Serbia}
\email{Vladimir.Dragovic@utdallas.edu}

\address{
Mathematical Institute - the corresponding author\\
Serbian Academy of Sciences and Arts\\
Belgrade\\
Serbia}
\email{gajab@mi.sanu.ac.rs}

\address{
Mathematical Institute\\
Serbian Academy of Sciences and Arts\\
Belgrade\\
Serbia}
\email{bozaj@mi.sanu.ac.rs}

\subjclass[2010]{37J60, 37J35, 70E40,  70F25}

\keywords{Nonholonimic dynamics; rolling without slipping, invariant measure; integrability}

\begin{document}

\begin{abstract}
We first construct nonholonomic systems of $n$ homogeneous balls $\mathbf B_1,\dots,\mathbf B_n$ with centers $O_1,...,O_n$ and with the same radius $r$ that are rolling without slipping around a fixed sphere $\mathbf S_0$ with center $O$ and radius $R$.
In addition, it is assumed that a dynamically nonsymmetric sphere $\mathbf S$ of radius $R+2r$ and the center that coincides with the center $O$ of the fixed sphere  $\mathbf S_0$ rolls without slipping over the moving balls $\mathbf B_1,\dots,\mathbf B_n$.
We prove that these systems possess an invariant measure.
As the second task, we consider the limit, when the radius $R$ tends to infinity. We obtain a corresponding planar problem consisting of
$n$ homogeneous balls $\mathbf B_1,\dots,\mathbf B_n$ with centers $O_1,...,O_n$ and the same radius $r$ that are rolling without slipping
over a fixed plane $\Sigma_0$, and a moving plane $\Sigma$  that moves without slipping over the homogeneous balls.
We prove that this system possesses an invariant measure and that it is integrable in quadratures according to the Euler-Jacobi theorem.

\end{abstract}

\maketitle

\section{Introduction}

In this paper, we first construct nonholonomic systems of $n$ homogeneous balls $\mathbf B_1,\dots,\mathbf B_n$ with centers $O_1,...,O_n$ and with the same radius $r$ that are rolling without slipping around a fixed sphere $\mathbf S_0$ with center $O$ and radius $R$.
We assume that a dynamically nonsymmetric sphere $\mathbf S$ of radius $R+2r$ and the center that coincides with the center $O$ of the fixed sphere  $\mathbf S_0$ rolls without slipping over the moving balls $\mathbf B_1,\dots,\mathbf B_n$.
 The rolling of the balls $\mathbf B_i$ and the sphere $\mathbf S$ are considered under the inertia and in the absence of external forces.
 We refer to this system as a \emph{spherical ball bearing}  (see Figure 1).

As the second task, we consider the limit, when the radius $R$ tends to infinity. In that way, we obtain a corresponding planar problem consisting of
$n$ homogeneous balls $\mathbf B_1,\dots,\mathbf B_n$ with centers $O_1,...,O_n$ and the same radius $r$ that are rolling without slipping
over a fixed plane $\Sigma_0$, and a moving plane $\Sigma$
{that} moves without slipping over the homogeneous balls.
We refer to this system as a \emph{planar ball bearing}  (see Figure 2).

Although the rolling ball problems are very well studied (see \cite{BM2002, BMB2013, BMT2014, BMbook}), the spherical and planar bearing problems seem not to be considered before. There are two nonholonomic systems which are close to the spherical ball bearings. One is the so-called spherical support system, introduced by Fedorov in \cite{F1}. It describes the rolling without slipping of a dynamically nonsymmetric sphere $\mathbf S$ over $n$ homogeneous balls $\mathbf B_1, \dots,\mathbf B_n$ of possibly different radii, but with fixed centers. The second one is the rolling of a homogeneous ball $\mathbf B$ over a dynamically asymmetric sphere $\mathbf S$, introduced by
Borisov, Kilin, and Mamaev in \cite{BKM}. They considered in \cite{BKM} both situations: when the center of $\mathbf S$ is fixed, and when it is not.

In Section \ref{sec2} we define the spherical ball bearing system: the configuration space $Q$, the nonholonomic distribution $\mathcal D\subset TQ$ and the Lagrangian that coincides with the kinetic energy of the system. The kinetic energy and the distribution  are invariant with respect to an appropriate action of the Lie group $SO(3)^{n+1}$,
and the system can be reduced to $\mathcal M=\mathcal D/SO(3)^{n+1}$.
In Section \ref{sec3} we derive the equations of motion of the reduced spherical ball bearing system in terms of the reaction forces and list some first integrals in Propositions \ref{prva} and \ref{go}.  Proposition \ref{prva} implies that the centers $O_1,...,O_n$ are in rest in relation to each other. Thus, there are no collisions of the balls $\mathbf B_1,\dots,\mathbf B_n$.

\

\begin{pspicture}(12,7)
\pscircle[linecolor=black,fillstyle=solid, fillcolor=gray!20, linestyle=dashed](3.5,3){0.88}
\psellipticarc[linestyle=dashed](3.5,3)(0.88,0.2){-60}{180}
\psellipticarc(3.5,3)(0.88,0.2){180}{300}
\psdot[dotsize=2pt](3.5,3)\uput[0](2.7,3.2){$O_1$}
\psellipticarc[linecolor=black](3.5,3)(0.895,0.895){102}{304}
\psdot[dotsize=2pt](4,3.27)\uput[0](4,3.27){$A_1$}
\psdot[dotsize=2pt](3,2.73)\uput[0](3,2.73){$B_1$}

\pscircle[linecolor=black, fillstyle=solid, fillcolor=gray!20](7,3){1}
\psellipticarc[linestyle=dashed](7,3)(1,0.2){0}{180}
\psellipticarc(7,3)(1,0.2){180}{360}
\psdot[dotsize=2pt](7,3)\uput[0](7,3.2){$O_2$}
\psdot[dotsize=2pt](7.8,2.5)\uput[0](7.8,2.5){$B_2$}
\psdot[dotsize=2pt](6.2,3.5)\uput[0](6.2,3.5){$A_2$}

\pscircle[linecolor=black, fillstyle=solid, fillcolor=gray!20, linestyle=dashed](5.5,5){0.8}
\psellipticarc[linestyle=dashed](5.5,5)(.8,0.2){0}{180}
\psellipticarc[linestyle=dashed](5.5,5)(0.8,0.2){180}{360}
\psdot[dotsize=2pt](5.5,5)\uput[0](5.5,5.2){$O_3$}
\psdot[dotsize=2pt](5.38,4.4)\uput[0](5.38,4.6){$A_3$}
\psdot[dotsize=2pt](5.62,5.6)\uput[0](5.52,5.8){$B_3$}

\pscircle[linecolor=black, linestyle=dashed](5.3,4){2}
\pscircle[linecolor=black](5.3,4){3}
\psellipticarc[linecolor=black](5.3,4)(2.015,2.015){0}{300}
\psdot[dotsize=2pt](5.3,4)\uput[0](5.3,4){$O$}
\psellipticarc[linestyle=dashed](5.3,4)(2,0.2){-10}{180}
\psellipticarc[](5.3,4)(2,0.2){180}{350}

\uput[0](3.3,4.5){$\mathbf S_0$}
\uput[0](2.6,5.5){$\mathbf S$}
\uput[0](2,0.5){{\sc Figure 1}. Spherical ball bearing for $n=3$}
\end{pspicture}

In Section \ref{sec4} we perform the second reduction by fixing the values of the $n$ first
integrals from Proposition \ref{go}. We obtain the closed system of equations of motion of the reduced spherical ball bearing system on the space $\mathcal N=\R^3\times (S^2)^n$ in Theorem \ref{Glavna}.
The complete set of non-reduced equations of motion is given in Corollary \ref{Originalne}.
Finally, we prove that the spherical ball bearing problem has an invariant measure in Theorem \ref{mera}, in Section \ref{sec5}. The question of integrability in  the spherical ball bearing problem will be studied in a separate paper.

In Section \ref{sec6} we consider the planar ball bearing problem. To simplify notation, we consider the case $n=3$ and refer to the system as the
\emph{three balls planar bearing problem}.
However all the  statements and considerations from Section \ref{sec6} hold for arbitrary $n$ in a straightforward manner.

For general $n$, the configuration space and the nonholonomic distribution $\mathcal D$ are of the same dimensions
as in the spherical ball bearing problem but the description of the system is slightly different.
For $n=3$, we derive the equations of motion on $\mathcal D/SO(3)\times SO(3)\times SO(3)$, see Theorem \ref{Glavna2}. We perform a second reduction to a space
$\mathcal Q\subset \R^6$ defined by an algebraic inequality. We prove that the planar three balls bearing problem on $\mathcal Q$ has an invariant measure and
four independent first integrals. Therefore, it is integrable according to the Euler-Jacobi theorem, see Theorem \ref{mera2}.

\section{Rolling of a dynamically nonsymmetric sphere over $n$ moving homogeneous balls and a fixed sphere}\label{sec2}

We consider the following spherical ball bearing problem:
$n$ homogeneous balls $\mathbf B_1,\dots,\mathbf B_n$ with centers $O_1,...,O_n$ and the same radius $r$  roll without slipping around a fixed sphere $\mathbf S_0$ with center $O$ and radius $R$. A dynamically nonsymmetric sphere $\mathbf S$ of radius $R+2r$ with the center that coincides with the center $O$ of the fixed sphere $\mathbf S_0$  rolls without slipping over the moving balls $\mathbf B_1,\dots,\mathbf B_n$.

For $n \ge 4$ there are initial positions of the balls $\mathbf B_1,\dots,\mathbf B_n$ that imply the condition that the centre of the moving sphere $\mathbf S$ coincides with the centre $O$ of the fixed sphere $\mathbf S_0$. Let us reiterate that the configuration of the balls is congruent during the time evolution.
In order to include all possible initial positions for arbitrary $n$, the condition that $O$ coincides with the centre of the sphere $\mathbf S$ is assumed to be a holonomic constraint.

Let
\[
O\vec{\mathbf e}^0_1,\vec{\mathbf e}^0_2,\vec{\mathbf e}^0_3, \qquad O\vec{\mathbf e}_1,\vec{\mathbf e}_2,\vec{\mathbf e}_3,
\qquad O_i\vec{\mathbf e}^i_1,\vec{\mathbf e}^i_2,\vec{\mathbf e}^i_3, \qquad i=1,\dots,n
\]
be positively oriented reference frames rigidly attached to the spheres $\mathbf S_0$, $\mathbf S$, and the balls $\mathbf B_i$, $i=1,\dots,n$, respectively.
By $\mathbf g,\mathbf g_i\in SO(3)$ we denote the matrices that map the
moving frames $O\vec{\mathbf e}_1,\vec{\mathbf e}_2,\vec{\mathbf e}_3$ and $O_i\vec{\mathbf e}^i_1,\vec{\mathbf e}^i_2,\vec{\mathbf e}^i_3$
to the fixed frame $O\vec{\mathbf e}^0_1,\vec{\mathbf e}^0_2,\vec{\mathbf e}^0_3$:
\[
\mathbf g_{jk}=\langle \vec{\mathbf e}^0_j,\vec{\mathbf e}_k \rangle, \qquad  \mathbf g_{i,jk}=\langle \vec{\mathbf e}^0_j,\vec{\mathbf e}^i_k \rangle,
\qquad  j,k=1,2,3, \qquad i=1,\dots,n.
\]

We apply the standard isomorphism between the Lie algebras $(so(3),[\cdot,\cdot])$ and $(\R^3,\times)$
given by
\begin{equation}\label{izomorfizam}
a_{ij}=-\varepsilon_{ijk}a_k, \qquad i,j,k=1,2,3,
\end{equation}
The skew-symmetric matrices
\[
\omega=\dot{\mathbf g}{\mathbf g}^{-1}, \qquad \omega_i=\dot{\mathbf g}_i \mathbf g_i^{-1}
\]
correspond to the angular velocities $\vec{\omega}$, $\vec{\omega}_i$ of the sphere $\mathbf S$ and the $i$-th ball $\mathbf B_i$ in the fixed reference frame
$O\vec{\mathbf e}^0_1,\vec{\mathbf e}^0_2,\vec{\mathbf e}^0_3$ attached to the sphere $\mathbf S_0$. The matrices
\[
\Omega=\mathbf g^{-1}\dot{\mathbf g}=\mathbf g^{-1}\omega\mathbf g, \qquad W_i=\mathbf g^{-1}_i\dot{\mathbf g}_i=\mathbf g^{-1}_i\omega_i\mathbf g_i
\]
correspond to the angular velocities $\vec{\Omega}$, $\vec{W}_i$ of $\mathbf S$ and $\mathbf B_i$ in the frames $O\vec{\mathbf e}_1,\vec{\mathbf e}_2,\vec{\mathbf e}_3$
and  $O_i\vec{\mathbf e}^i_1,\vec{\mathbf e}^i_2,\vec{\mathbf e}^i_3$ attached to the sphere $\mathbf S$ and the balls $\mathbf B_i$, respectively.

We have
\[
\vec{\omega}=\mathbf g\vec{\Omega}, \qquad \vec{\omega}_i=\mathbf g_i \vec{W}_i.
\]

Let $I$ be the inertia operator of the outer sphere $\mathbf S$.
We choose the moving frame $O\vec{\mathbf e}_1,\vec{\mathbf e}_2,\vec{\mathbf e}_3$,
such that $O\vec{\mathbf e}_1$, $O\vec{\mathbf e}_2$, $O\vec{\mathbf e}_3$ are the principal axes of inertia: $I=\diag(A,B,C)$.
Let $\diag(I_i,I_i,I_i)$ and $m_i$  be the inertia operator and the mass of the $i$-th ball $\mathbf B_i$.
Then the configuration space and the kinetic energy  of the problem are given by:
\begin{align*}
Q=&SO(3)^{n+1}\times (S^2)^{n}\{\mathbf g,\mathbf g_1,\dots,\mathbf g_n,\vec\gamma_1,\dots,\vec\gamma_n\},\\
T=&\frac12 \langle I\vec\Omega,\vec\Omega\rangle+\frac12 \sum_{i=1}^n I_i \langle \vec{W}_i,\vec{W}_i\rangle+\frac12 \sum_{i=1}^n m_i \langle \vec{v}_{O_i},\vec{v}_{O_i}\rangle\\
=&\frac12 \langle I\vec\Omega,\vec\Omega\rangle+\frac12 \sum_{i=1}^n I_i \langle \vec{\omega}_i,\vec{\omega}_i\rangle+\frac12 \sum_{i=1}^n m_i \langle \vec{v}_{O_i},\vec{v}_{O_i}\rangle.
\end{align*}
Here $\vec{\gamma}_i$ is the unit vector
\[
\vec{\gamma}_i=\frac{\overrightarrow{OO_i}}{|\overrightarrow{OO_i}|}
\]
determining the position $O_i$ of the centre of $i$-th ball $\mathbf B_i$ and $\vec{v}_{O_i}=(R+r)\dot{\vec\gamma}_i$ is its velocity, $i=1,\dots,n$.
The kinetic energy plays the role of the Lagrangian.

Let us denote the contact points of the balls $\mathbf B_1,\dots,\mathbf B_n$ with the spheres $\mathbf S_0$ and $\mathbf S$ by $A_1,..., A_n$ and
$B_1, B_2,...,B_n$, respectively.
The condition that the rolling of the balls $\mathbf B_1,\dots,\mathbf B_n$ and the sphere $\mathbf S$ are without slipping leads to the nonholonomic constraints:
\[
\vec{v}_{O_i}+\vec{\omega}_i\times\overrightarrow{O_iA_i}=0, \qquad \vec{v}_{O_i}+\vec{\omega}_i\times\overrightarrow{O_iB_i}=\vec{\omega}\times\overrightarrow{OB_i},
\qquad i=1,...,n,
\]
that is,
\begin{equation}\label{VEZE}
\vec{v}_{O_i}=r\vec{\omega}_i\times\vec{\gamma}_i, \qquad \vec{v}_{O_i}=(R+2r)\vec{\omega}\times\vec{\gamma}_i-r\vec{\omega}_i\times\vec{\gamma}_i,\qquad  i=1,...,n.
\end{equation}

The dimension of the configuration space $Q$ is $5n+3$. There are $4n$ independent constraints in \eqref{VEZE}, defining
a nonintegrable distribution  $\mathcal D\subset TQ$. Therefore, the dimension of the vector subspaces of admissible
velocities $\mathcal D_q\subset T_q Q$ is $n+3$, $q\in Q$.
The phase space of the system has the dimension $6n+6$, which is the dimension of the bundle $\mathcal D$ as a submanifold of $TQ$.

The equations of motion of the spherical ball bearing problem are given by the Lagrange-d'Alembert equations {\cite{AKN, Bloch}}
\begin{equation}\label{L1}
\delta T=\big(\frac{\partial T}{\partial q}-\frac{d}{dt}\frac{\partial T}{\partial \dot q},\delta q\big)=0, \quad \text{for all virtual displacement} \quad \delta q\in\mathcal D_q.
\end{equation}

Instead of using the Lagrange-d'Alembert equations,
below we will derive the equations of motion directly from the fundamental laws of classical mechanics and using the vector notation.

\begin{remark}
We will show in Proposition \ref{prva} that if the initial conditions are chosen such that the distances between $O_i$ and $O_j$ are all greater than $2r$, $1 \le i< j \le n$, then
the balls will not have collisions along the course of motion. This is the reason why we do not assume additional one-side constraints
\begin{equation}\label{oblast}
\vert \vec\gamma_i-\vec\gamma_j \vert \ge \frac{2r}{r+R}, \qquad 1\le i<j\le n.
\end{equation}
\end{remark}

The kinetic energy and the constraints are invariant with respect to the $SO(3)^{n+1}$--action defined by
\begin{equation}\label{transf}
(\mathbf g,\mathbf g_1,\dots,\mathbf g_n,\vec\gamma_1,\dots,\vec\gamma_n)\longmapsto (\mathbf a\mathbf g,\mathbf a\mathbf g_1{\mathbf a}_1^{-1},\dots,
\mathbf a\mathbf g_n{\mathbf a}_n^{-1},\mathbf a\vec\gamma_1,\dots,\mathbf a\vec\gamma_n),
\end{equation}
$\mathbf a,\mathbf a_1,\dots,\mathbf a_n\in SO(3)$, representing a freedom in the choice of the reference frames
\[
O\vec{\mathbf e}^0_1,\vec{\mathbf e}^0_2,\vec{\mathbf e}^0_3,
\qquad O_i\vec{\mathbf e}^i_1,\vec{\mathbf e}^i_2,\vec{\mathbf e}^i_3, \qquad i=1,\dots,n.
\]

Indeed, for the extension of the transformation \eqref{transf} to the tangent bundle $TQ$ we have
\begin{align*}
\Omega=\mathbf g^{-1}\dot{\mathbf g}      &\longmapsto \big(\mathbf a\mathbf g\big)^{-1}\mathbf a\dot{\mathbf g}=\Omega, \\
\omega=\dot{\mathbf g}{\mathbf g}^{-1}    &\longmapsto \mathbf a\dot{\mathbf g}\big(\mathbf a{\mathbf g}\big)^{-1}=\mathbf a\omega\mathbf a^{-1},\\
\omega_i=\dot{\mathbf g}_i \mathbf g_i^{-1}&\longmapsto \mathbf a \dot{\mathbf g}_i \mathbf a_i^{-1} \big(\mathbf a\mathbf g_i\mathbf a_i^{-1}\big)^{-1}=\mathbf a\omega_i\mathbf a^{-1},\\
\dot{\vec\gamma}&\longmapsto \mathbf a \dot{\vec\gamma},
\end{align*}
and, therefore,
\begin{equation}\label{transf1}
\vec{\Omega}\longmapsto \vec{\Omega}, \qquad \vec{\omega}\longmapsto \mathbf a\vec{\omega}, \qquad \vec{\omega}_i\longmapsto \mathbf a\vec{\omega}_i, \qquad
\vec{v}_{O_i}\longmapsto \mathbf a\vec{v}_{O_i}.
\end{equation}

It is clear that the kinetic energy and the constraints are invariant with respect to the transformation \eqref{transf1}.
Also, note that \eqref{transf} does not change the vectors $\vec{\omega}_i$, $\vec{\gamma}_i$, $\vec{v}_{O_i}$ written in the moving frame $O\vec{\mathbf e}_1,\vec{\mathbf e}_2,\vec{\mathbf e}_3$:
\begin{align*}
\vec{\Omega}_i=\mathbf g^{-1}\vec{\omega_i}  &\longmapsto \big(\mathbf a\mathbf g\big)^{-1}\mathbf a\vec{\omega_i}=\vec{\Omega}_i,  \\
\vec{\Gamma}_i=\mathbf g^{-1} \vec{\gamma}_i &\longmapsto \big(\mathbf a\mathbf g\big)^{-1}\mathbf a \vec{\gamma}_i=\vec{\Gamma}_i
\quad (\text{implying   }\dot{\vec\Gamma}_i\longmapsto  \dot{\vec\Gamma}_i), \\
\vec{V}_{O_i}=\mathbf g^{-1}\vec{v}_{O_i}=(R+r)\mathbf g^{-1}\dot{\vec\gamma}_i
&\longmapsto  (R+r)\big(\mathbf a\mathbf g\big)^{-1}\mathbf a\dot{\vec\gamma}_i=\mathbf g^{-1}\vec{v}_{O_i}=\vec{V}_{O_i}.
\end{align*}

Thus, for the coordinates in the space $(TQ)/SO(3)^{n+1}$ we can take the angular velocities and the unit position vectors in the reference frame
attached to the sphere $\mathbf S$:
\[
(TQ)/SO(3)^{n+1}\cong{\mathbb R}^{3(n+1)}\times (TS^2)^n \{\vec\Omega,\vec{\Omega}_1,\dots,\vec{\Omega}_n,\dot{\vec{\Gamma}}_1,\dots,\dot{\vec{\Gamma}}_n,\vec{\Gamma}_1,\dots,\vec{\Gamma}_n\}.
\]

In the moving reference frame $O\vec{\mathbf e}_1,\vec{\mathbf e}_2,\vec{\mathbf e}_3$, the constraints become:
\begin{align}
\label{veze1} \vec{V}_{O_i}&=(R+2r)\vec{\Omega}\times\vec{\Gamma}_i-r\vec{\Omega}_i\times\vec{\Gamma}_i,\\
\label{veze1*}\vec{V}_{O_i}&=r\vec{\Omega}_i\times\vec{\Gamma}_i, \qquad\qquad\qquad i=1,\dots,n,
\end{align}
defining the \emph{reduced phase space} $\mathcal M=\mathcal D/SO(3)^{n+1}\subset (TQ)/SO(3)^{n+1}$ of dimension $3n+3$.

Since both the kinetic energy and the constraints are invariant with respect to the $SO(3)^{n+1}$--action \eqref{transf}, the equations of motion \eqref{L1} are also
$SO(3)^{n+1}$--invariant. Thus, they induce a well defined system on the reduced phase space $\mathcal M$. 

\section{ The kinematic and momentum equations in the moving frame}\label{sec3}

The time derivative of $\vec{\Gamma}_i$ can be directly extracted from the constraints as follows.

\begin{lem} The kinematic part of the equations of motion of the spherical ball bearing system is:
\begin{equation}\label{game1}
\dot{\vec{\Gamma}}_i=\frac{R}{2R+2r}\vec{\Gamma}_i\times\vec{\Omega}, \qquad i=1,\dots,n.
\end{equation}
\end{lem}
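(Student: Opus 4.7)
The plan is to extract $\dot{\vec\Gamma}_i$ from the two no-slip constraints \eqref{veze1}--\eqref{veze1*} by first expressing $\vec V_{O_i}$ in terms of $\vec\Omega$ and $\vec\Gamma_i$ alone, and then differentiating the definition $\vec\Gamma_i=\mathbf g^{-1}\vec\gamma_i$ with care to account for the body frame rotation.

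First, I would combine the two constraint equations. Adding \eqref{veze1} and \eqref{veze1*} eliminates $\vec\Omega_i\times\vec\Gamma_i$ and yields
\[
2\vec V_{O_i}=(R+2r)\,\vec\Omega\times\vec\Gamma_i,
\]
so $\vec V_{O_i}=\tfrac{R+2r}{2}\,\vec\Omega\times\vec\Gamma_i$. (The subtraction gives the companion relation $(R+2r)\vec\Omega\times\vec\Gamma_i=2r\vec\Omega_i\times\vec\Gamma_i$, which is not needed here but will be useful later.)

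Next, I would differentiate $\vec\Gamma_i=\mathbf g^{-1}\vec\gamma_i$ in time. Using $\Omega=\mathbf g^{-1}\dot{\mathbf g}$ and the formula $\frac{d}{dt}\mathbf g^{-1}=-\mathbf g^{-1}\dot{\mathbf g}\,\mathbf g^{-1}$, we have
\[
\dot{\vec\Gamma}_i=-\vec\Omega\times\vec\Gamma_i+\mathbf g^{-1}\dot{\vec\gamma}_i
=-\vec\Omega\times\vec\Gamma_i+\frac{1}{R+r}\vec V_{O_i},
\]
where in the last step I used the definition $\vec v_{O_i}=(R+r)\dot{\vec\gamma}_i$ together with $\vec V_{O_i}=\mathbf g^{-1}\vec v_{O_i}$.

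Substituting the expression obtained above for $\vec V_{O_i}$ gives
\[
\dot{\vec\Gamma}_i=\left(-1+\frac{R+2r}{2(R+r)}\right)\vec\Omega\times\vec\Gamma_i
=-\frac{R}{2(R+r)}\,\vec\Omega\times\vec\Gamma_i
=\frac{R}{2R+2r}\,\vec\Gamma_i\times\vec\Omega,
\]
which is the claimed formula. The only subtlety is the transport term $-\vec\Omega\times\vec\Gamma_i$ that arises because $\vec\Gamma_i$ is read in the frame rigidly attached to the moving sphere $\mathbf S$; it combines with the constraint contribution $\tfrac{R+2r}{2(R+r)}$ to produce precisely the factor $-\tfrac{R}{2(R+r)}$.
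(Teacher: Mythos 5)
Your proof is correct and follows essentially the same route as the paper: it combines the two no-slip constraints \eqref{veze1}--\eqref{veze1*} with the frame-transport identity for $\vec\Gamma_i=\mathbf g^{-1}\vec\gamma_i$, merely eliminating $\vec\Omega_i$ at the outset (by adding the constraints) rather than via the intermediate relation \eqref{veze2}. The only cosmetic difference is that the paper records the intermediate formula \eqref{game}, $\dot{\vec\Gamma}_i=\bigl(\tfrac{r}{R+r}\vec\Omega_i-\vec\Omega\bigr)\times\vec\Gamma_i$, which it reuses later (e.g.\ in Proposition \ref{go}), whereas your argument bypasses it.
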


\begin{proof}
Let us at the moment consider the fixed reference frame $O\vec{\mathbf e}^0_1,\vec{\mathbf e}^0_2,\vec{\mathbf e}^0_3$. One has
\[
\dot{\overrightarrow{OO}}_i+\vec{\omega}_i\times\overrightarrow{O_iA_i}=0.
\]

Therefore,
$(R+r)\dot{\vec{\gamma}}_i-r\vec{\omega}_i\times\vec{\gamma}_i=0$, or equivalently
\begin{equation}\label{eq:gamma_i}
\dot{\vec{\gamma}}_i=\frac{r}{R+r}\vec{\omega}_i\times\vec{\gamma}_i.
\end{equation}

The equation \eqref{eq:gamma_i} in the moving reference frame $O\vec{\mathbf e}_1,\vec{\mathbf e}_2,\vec{\mathbf e}_3$ has
the form
\[
\dot{\vec{\Gamma}}_i+\vec{\Omega}\times\vec{\Gamma}_i=\frac{r}{R+r}\vec{\Omega}_i\times\vec{\Gamma}_i.
\]

Thus, we get
\begin{equation}\label{game}
\dot{\vec{\Gamma}}_i=\big(\frac{r}{R+r}\vec{\Omega}_i-\vec{\Omega}\big)\times\vec{\Gamma}_i.
\end{equation}

From the constraints \eqref{veze1}, \eqref{veze1*}, we obtain
\begin{equation}\label{veze2}
\vec{\Omega}_i\times\vec{\Gamma}_i=\frac{R+2r}{2r}\vec{\Omega}\times\vec{\Gamma}_i, \qquad i=1,\dots,n.
\end{equation}
Finally, using \eqref{veze2}, the equations \eqref{game} can be written in a more convenient form \eqref{game1}.
\end{proof}

As a consequence, we have:

\begin{prop}\label{prva} The following functions are the first integrals of  motion:
\[
\langle\vec{\Gamma}_i,\vec{\Gamma}_j\rangle=\gamma_{ij}=const, \qquad i,j=1,\dots,n.
\]
\end{prop}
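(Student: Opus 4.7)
The plan is to differentiate $\langle\vec{\Gamma}_i,\vec{\Gamma}_j\rangle$ directly in time and show that the result vanishes by purely algebraic reasons, using only the kinematic formula \eqref{game1} from the preceding lemma. No dynamical input (equations for $\vec\Omega$, $\vec\Omega_i$, or the reaction forces) should be needed.

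First, I would write
\[
\frac{d}{dt}\langle\vec{\Gamma}_i,\vec{\Gamma}_j\rangle=\langle\dot{\vec{\Gamma}}_i,\vec{\Gamma}_j\rangle+\langle\vec{\Gamma}_i,\dot{\vec{\Gamma}}_j\rangle
\]
and substitute \eqref{game1} into both terms, obtaining a common factor $\tfrac{R}{2R+2r}$ times
\[
\langle\vec{\Gamma}_i\times\vec{\Omega},\vec{\Gamma}_j\rangle+\langle\vec{\Gamma}_i,\vec{\Gamma}_j\times\vec{\Omega}\rangle.
\]
Then I would rewrite each scalar triple product as a determinant, using $\langle a\times b,c\rangle=\det(a,b,c)$. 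The two terms become $\det(\vec{\Gamma}_i,\vec{\Omega},\vec{\Gamma}_j)$ and $\det(\vec{\Gamma}_j,\vec{\Omega},\vec{\Gamma}_i)$, which cancel by the antisymmetry of the determinant under the swap of the first and third arguments. Hence $\tfrac{d}{dt}\langle\vec{\Gamma}_i,\vec{\Gamma}_j\rangle\equiv 0$.

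There is essentially no obstacle here; the only thing to watch is not to invoke \eqref{veze2} or anything else that requires the constraints \eqref{veze1}--\eqref{veze1*}, since \eqref{game1} already encodes everything needed. In particular the case $i=j$ (reproducing $|\vec{\Gamma}_i|=1$) and the case $i\ne j$ are treated by the same one-line computation, and the statement is an immediate corollary rather than a separate calculation. As a geometric remark to include, the identity says that the relative positions of the centers $O_1,\dots,O_n$, viewed from the body frame of $\mathbf S$, are frozen, which then justifies the non-collision claim made in the introduction provided the initial mutual distances exceed $2r$.
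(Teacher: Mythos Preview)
Your proposal is correct and follows essentially the same approach as the paper's own proof: differentiate $\langle\vec{\Gamma}_i,\vec{\Gamma}_j\rangle$, substitute \eqref{game1}, and use the antisymmetry of the scalar triple product to conclude that the derivative vanishes. The only cosmetic difference is that you phrase the cancellation via determinants while the paper leaves it at the level of triple products, but the argument is identical.
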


\begin{proof} By a direct differentiation, we get
\begin{align*}
\frac{d}{dt}\langle\vec{\Gamma}_i,\vec{\Gamma}_j\rangle &=
\langle\dot{\vec{\Gamma}}_i,\vec{\Gamma}_j\rangle+
\langle\vec{\Gamma}_i,\dot{\vec{\Gamma}}_j\rangle\\
&=\frac{R}{2R+2r}\big(\langle \vec{\Gamma}_i\times\vec{\Omega},\vec{\Gamma}_j\rangle+\langle\vec{\Gamma}_j\times\vec{\Omega},\vec{\Gamma}_i\rangle\big)=0.
\end{align*}
\end{proof}

In other words, the centers $O_i$ of the homogeneous balls $\mathbf B_i$ are in rest in relation to each other.
In particular, since $\langle \vec\gamma_i,\vec\gamma_j\rangle=\langle \vec\Gamma_i,\vec\Gamma_j\rangle$,
the interior of the region \eqref{oblast} is invariant under the flow of the system.

Next, let $\vec{\mathbf F}_{B_i}$ and $\vec{\mathbf F}_{A_i}$ be the reaction forces that act on the ball $\mathbf B_i$ at the points $B_i$ and $A_i$, respectively.
The reaction force at the point $B_i$ on the sphere $\mathbf S$ is then $-\vec{\mathbf F}_{B_i}$.

By using the laws of change of angular momentum and  momentum of a rigid body in the moving reference frame for the balls $\mathbf B_i$ and the sphere $\mathbf S$ {(e.g., see \cite{AKN})}, we get:

\begin{lem} The dynamical part of the equations of motion of the spherical ball bearing system is:
\begin{align}
\label{jednacine1} I_i\dot{\vec{\Omega}}_i &=I_i\vec{\Omega}_i\times \vec{\Omega} + r\vec{\Gamma}_i\times(\vec{\mathbf F}_{B_i}-\vec{\mathbf F}_{A_i}),\\
\label{jednacine2} m_i\dot{\vec{V}}_{O_i} &=m_i\vec{V}_{O_i} \times \vec{\Omega}+\vec{\mathbf F}_{B_i}+\vec{\mathbf F}_{A_i}, \qquad\qquad\qquad i=1,...,n\\
\label{jednacine3} I\dot{\vec{\Omega}}&=I\vec{\Omega}\times \vec{\Omega} -\sum_{i=1}^{n}(R+2r)\vec{\Gamma}_i\times\vec{\mathbf F}_{B_i}.
\end{align}
\end{lem}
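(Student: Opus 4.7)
The plan is to obtain each of the three equations by applying the Newton--Euler laws in the space-fixed frame and then transporting them to the moving frame $O\vec{\mathbf e}_1, \vec{\mathbf e}_2, \vec{\mathbf e}_3$ attached to the sphere $\mathbf S$. The key kinematic identity is: for any vector $\vec X$ with body representation $\vec{\mathcal X}=\mathbf g^{-1}\vec X$, one has $\dot{\vec X}=\mathbf g(\dot{\vec{\mathcal X}}+\vec\Omega\times\vec{\mathcal X})$. This turns every $\frac{d}{dt}$ appearing in a space-fixed equation of motion into $\frac{d}{dt}+\vec\Omega\,\times$ acting on the body-frame representation.

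The preliminary step is to identify the contact vectors and the lines of application of the reaction forces in the moving frame. Since $B_i$ is the outward contact of $\mathbf B_i$ with $\mathbf S$ and $A_i$ is the inward contact with $\mathbf S_0$, we have $\overrightarrow{O_iB_i}=r\vec{\Gamma}_i$, $\overrightarrow{O_iA_i}=-r\vec{\Gamma}_i$ and $\overrightarrow{OB_i}=(R+2r)\vec{\Gamma}_i$. By Newton's third law the reaction acting on the sphere $\mathbf S$ at $B_i$ is $-\vec{\mathbf F}_{B_i}$.

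Equation \eqref{jednacine2} then follows at once from Newton's second law $m_i\dot{\vec v}_{O_i}=\vec{\mathbf F}_{B_i}+\vec{\mathbf F}_{A_i}$ in the space frame: applying the transport identity to $\vec v_{O_i}$ produces a Coriolis-type term $m_i\vec\Omega\times\vec V_{O_i}$ on the left, which becomes $m_i\vec V_{O_i}\times\vec\Omega$ after moving to the right. For equation \eqref{jednacine1}, because the inertia tensor of $\mathbf B_i$ is isotropic, its angular momentum about $O_i$ is simply $I_i\vec{\omega}_i$ in the space frame, and the angular-momentum theorem reads $I_i\dot{\vec\omega}_i=\overrightarrow{O_iB_i}\times\vec{\mathbf F}_{B_i}+\overrightarrow{O_iA_i}\times\vec{\mathbf F}_{A_i}$. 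Substituting $\vec\omega_i=\mathbf g\vec{\Omega}_i$, applying the transport identity and inserting the contact vectors yields the stated form, with the torque collapsing to $r\vec{\Gamma}_i\times(\vec{\mathbf F}_{B_i}-\vec{\mathbf F}_{A_i})$. Equation \eqref{jednacine3} is cleanest when written directly in the body frame of $\mathbf S$: the total external torque about $O$ equals $\sum_i\overrightarrow{OB_i}\times(-\vec{\mathbf F}_{B_i})=-\sum_i(R+2r)\vec{\Gamma}_i\times\vec{\mathbf F}_{B_i}$, and the standard identity $-\vec\Omega\times I\vec\Omega=I\vec\Omega\times\vec\Omega$ rearranges the gyroscopic term into the stated form.

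There is no serious analytic obstacle; the care needed is purely bookkeeping — tracking which vectors live in which frame, correctly signing the reaction of $\mathbf B_i$ on $\mathbf S$, and using $-\vec a\times\vec b=\vec b\times\vec a$ consistently when relocating Coriolis and gyroscopic terms. The essential simplification, without which \eqref{jednacine1} would pick up an extra quadratic term $\vec\Omega_i\times I_i\vec\Omega_i$, is the isotropy of each ball's inertia operator, which makes the angular momentum of $\mathbf B_i$ insensitive to the choice between its own body frame and the frame attached to $\mathbf S$.
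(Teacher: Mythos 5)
Your proposal is correct and follows exactly the route the paper intends: the paper states this lemma without a written proof, simply invoking the laws of change of momentum and angular momentum of a rigid body in the moving reference frame, and your derivation — transport identity $\dot{\vec X}=\mathbf g(\dot{\vec{\mathcal X}}+\vec\Omega\times\vec{\mathcal X})$, contact vectors $\overrightarrow{O_iB_i}=r\vec{\Gamma}_i$, $\overrightarrow{O_iA_i}=-r\vec{\Gamma}_i$, $\overrightarrow{OB_i}=(R+2r)\vec{\Gamma}_i$, Newton's third law at $B_i$ — is a correct and more detailed version of precisely that. Your closing observation that the isotropy of each ball's inertia operator is what permits writing its angular momentum as $I_i\vec{\Omega}_i$ in the frame of $\mathbf S$ rather than in the ball's own body frame is exactly the right point to flag.
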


For $n=1$ and the absence of the interior fixed sphere $\mathbf S_0$, i.e. $\vec{\mathbf F}_{A_i}=0$, see \cite{BKM}.

We still need to calculate the torques of reaction forces. Prior to that,
we formulate and prove the following important statement.

\begin{prop}\label{go} The projections of the angular velocities $\vec{\Omega}_i$ to
to the directions $\vec{\Gamma}_i$ are the first integrals of motion:
\[
\langle\vec{\Omega}_i,\vec{\Gamma}_i\rangle=c_i=const, \qquad i=1,...,n.
\]
\end{prop}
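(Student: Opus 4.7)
The plan is to differentiate $\langle\vec\Omega_i,\vec\Gamma_i\rangle$ in time and to verify that every term vanishes when we combine the angular momentum equation \eqref{jednacine1}, the kinematic equation \eqref{game1}, and the algebraic form of the constraint \eqref{veze2}. That is,
\[
\frac{d}{dt}\langle\vec\Omega_i,\vec\Gamma_i\rangle
=\langle\dot{\vec\Omega}_i,\vec\Gamma_i\rangle+\langle\vec\Omega_i,\dot{\vec\Gamma}_i\rangle,
\]
and I would treat the two summands separately.

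For the first summand, I would substitute $I_i\dot{\vec\Omega}_i$ from \eqref{jednacine1}. The crucial geometric observation is that the entire torque contribution has the form $\vec\Gamma_i\times(\vec{\mathbf F}_{B_i}-\vec{\mathbf F}_{A_i})$, which is manifestly orthogonal to $\vec\Gamma_i$ and therefore drops out upon taking the inner product with $\vec\Gamma_i$. This leaves only $\langle\vec\Omega_i\times\vec\Omega,\vec\Gamma_i\rangle$, which by the cyclic property of the scalar triple product equals $\langle\vec\Omega_i\times\vec\Gamma_i,\vec\Omega\rangle$ (up to sign). Now I invoke \eqref{veze2} to replace $\vec\Omega_i\times\vec\Gamma_i$ by $\tfrac{R+2r}{2r}\vec\Omega\times\vec\Gamma_i$, giving a triple product $\langle\vec\Omega\times\vec\Gamma_i,\vec\Omega\rangle$ that vanishes because the cross product is orthogonal to its own factor.

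For the second summand, the kinematic equation \eqref{game1} yields
\[
\langle\vec\Omega_i,\dot{\vec\Gamma}_i\rangle=\frac{R}{2R+2r}\langle\vec\Omega_i,\vec\Gamma_i\times\vec\Omega\rangle,
\]
and exactly the same triple-product manipulation applies: rewriting as $\langle\vec\Omega_i\times\vec\Gamma_i,\vec\Omega\rangle$ and using \eqref{veze2} reduces it to a vanishing expression.

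The main point to worry about is that in the first summand the reaction forces $\vec{\mathbf F}_{B_i}$, $\vec{\mathbf F}_{A_i}$ are at this stage undetermined, so any proof must eliminate them without computing them; the choice to inner-product against $\vec\Gamma_i$ accomplishes exactly that, since the contact points $A_i$ and $B_i$ lie on the line through $O_i$ in the direction $\pm\vec\Gamma_i$, forcing the torque about the $\vec\Gamma_i$-axis to be zero. Physically, this is the statement that the $\vec\Gamma_i$-component of spin angular momentum of the ball $\mathbf B_i$ is preserved because the only horizontal reaction forces pass through the axis $O_i\vec\Gamma_i$. Once that observation is made, everything else is a routine juggling of scalar triple products combined with the constraint \eqref{veze2}.
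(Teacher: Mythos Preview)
Your proof is correct. It differs from the paper's only in a minor bookkeeping choice: the paper substitutes the \emph{unsimplified} kinematic relation \eqref{game},
\[
\dot{\vec\Gamma}_i=\Big(\tfrac{r}{R+r}\vec\Omega_i-\vec\Omega\Big)\times\vec\Gamma_i,
\]
so that after dropping the torque term the two remaining pieces
$\langle\vec\Omega_i\times\vec\Omega,\vec\Gamma_i\rangle$ and $-\langle\vec\Omega_i,\vec\Omega\times\vec\Gamma_i\rangle$
cancel each other directly by the cyclic identity, with no appeal to \eqref{veze2}. You instead use the simplified form \eqref{game1} and then invoke \eqref{veze2} twice to kill each summand separately. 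Both routes are equally short; the paper's avoids the constraint \eqref{veze2} altogether, while yours makes each term vanish on its own rather than relying on a cancellation between them.
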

\begin{proof}
From \eqref{jednacine1} and \eqref{game} we get
\begin{align*}
\frac{d}{dt}\langle\vec{\Omega}_i,\vec{\Gamma}_i{\rangle} =&\langle\dot{\vec{\Omega}}_i,\vec{\Gamma}_i\rangle+\langle\vec{\Omega}_i,\dot{\vec{\Gamma}}_i\rangle\\
=  &\langle\vec{\Omega}_i\times\vec{\Omega}, \vec{\Gamma}_i\rangle+\langle \frac{r}{I_i}\vec{\Gamma}_i\times(\vec{\mathbf F}_{B_i}-\vec{\mathbf F}_{A_i}),\vec\Gamma_i \rangle\\ &+\langle\vec{\Omega}_i,\frac{r}{R+r}\vec{\Omega}_i\times \vec{\Gamma}_i\rangle -\langle\vec{\Omega}_i,\vec{\Omega}\times \vec{\Gamma}_i\rangle=0.
\end{align*}
\end{proof}

\section{ The reduced system}\label{sec4}

From the constraints written in the moving frame \eqref{veze1}, \eqref{veze1*}, \eqref{veze2}, we get
\[
\langle\vec{\Omega}\times\vec{\Gamma}_i, \vec{\Omega}_i\rangle=0.
\]
That means that vectors $\vec{\Gamma}_i,\ \vec{\Omega},\ \vec{\Omega}_i$ are coplanar. Moreover, we obtain:
\begin{equation}\label{omegai}
\vec{\Omega}_i=\langle\vec{\Gamma}_i,\vec{\Omega}_i\rangle\vec{\Gamma}_i+\frac{R+2r}{2r}\vec{\Omega}-\frac{R+2r}{2r}\langle\vec{\Gamma}_i,\vec{\Omega}\rangle\vec{\Gamma}_i.
\end{equation}

Further, from Proposition \ref{go}, we get that the reduced phase space $\mathcal M=\mathcal D/SO(3)^{n+1}$ is foliated on $2n+3$--dimensional invariant varieties
\[
\mathcal M_c: \qquad \langle\vec{\Omega}_i,\vec{\Gamma}_i\rangle=c_i=const, \qquad i=1,...,n.
\]

On the invariant variety $\mathcal M_c$, the vector-functions
$\vec{\Omega}_i$ can be uniquely expressed as functions of $\vec{\Omega}$, $\vec{\Gamma}_i$ using the equation \eqref{omegai},:
\begin{equation}\label{omegai1}
\vec{\Omega}_i=c_i\vec{\Gamma}_i+\frac{R+2r}{2r}\vec{\Omega}-\frac{R+2r}{2r}\langle\vec{\Gamma}_i,\vec{\Omega}\rangle\vec{\Gamma}_i.
\end{equation}

 Whence, $\vec\Omega$ determines all velocities of the system on $\mathcal M_c$ and $\mathcal M_c$ is diffeomorphic
to the \emph{second reduced phase space}
\[
\mathcal N=\R^3\times \big(S^2\big)^{n}\{\Omega,\vec\Gamma_1,\dots,\vec\Gamma_n\}.
\]

This can be seen as follows. Consider the natural projection
\begin{align*}
&\pi\colon (TQ)/SO(3)^{n+1}\cong{\mathbb R}^{3(n+1)}\times (TS^2)^n \to \mathcal N,\\
&\pi(\vec\Omega,\vec{\Omega}_1,\dots,\vec{\Omega}_n,\dot{\vec{\Gamma}}_1,\dots,\dot{\vec{\Gamma}}_n,\vec{\Gamma}_1,\dots,\vec{\Gamma}_n)=(\vec\Omega,\vec{\Gamma}_1,\dots,\vec{\Gamma}_n),
\end{align*}
and let $\pi_c$ be the restriction to
$\mathcal M_c \subset \mathcal M \subset (TQ)/SO(3)^{n+1}$ of $\pi$.
Then the projection
\[
\pi_c\colon \mathcal M_c\longmapsto \mathcal N
\]
is a bijection.

Thus, instead of the derivation of the torques of all reaction forces in \eqref{jednacine1} and \eqref{jednacine3}, it is sufficient to
find the torque in the equation \eqref{jednacine3} on a given invariant variety $\mathcal M_c$.

To simplify the equations \eqref{game1} and \eqref{omegai1}, we introduce the parameters
\begin{equation}\label{parametri}
\varepsilon=\frac{R}{2R+2r} \qquad \text{and} \qquad \delta=\frac{R+2r}{2r}.
\end{equation}

We define the \emph{modified operator of inertia} $\mathbf I$ as
\begin{equation}\label{modI}
\begin{aligned}
\mathbf I=I+\delta^2\sum_{i=1}^n(I_i+m_ir^2)\pr_i,
\end{aligned}
\end{equation}
where $\pr_i\colon \R^3 \to \vec{\Gamma}_i^\perp$ is the orthogonal projection to the plane orthogonal to $\vec{\Gamma}_i$.
We set
\begin{align}
\label{m1}\vec{M}=&\, \mathbf I\vec\Omega=
I\vec\Omega+\delta^2\sum_{i=1}^n(I_i+m_ir^2)\vec{\Omega}-\delta^2\sum_{i=1}^n(I_i+m_ir^2)\langle\vec{\Gamma}_i,\vec{\Omega}\rangle\vec{\Gamma}_i,\\
\label{n1}\vec{N}=&\, \delta\sum_{i=1}^n I_ic_i\vec{\Gamma}_i.
\end{align}

\begin{thm}\label{Glavna}
The reduction of the spherical ball bearing problem to $\mathcal M_c\cong \mathcal N$ is described
by the equations
\begin{align}
\label{red1} &\frac{d}{dt}{\vec{M}}=\vec{M}\times\vec{\Omega}+(1-\varepsilon)\vec N\times\vec{\Omega},\\
\label{red2} &\frac{d}{dt}\vec\Gamma_i=\varepsilon\vec\Gamma_i\times\vec\Omega, \qquad\qquad i=1,\dots,n.
\end{align}
\end{thm}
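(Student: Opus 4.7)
Equation \eqref{red2} is just a rewriting of formula \eqref{game1} using the parameter $\varepsilon = R/(2R+2r)$; the real content is \eqref{red1}. The plan is to eliminate the reaction forces from the three Newton--Euler equations \eqref{jednacine1}--\eqref{jednacine3} and then restrict to $\mathcal{M}_c$, using the closed-form expressions for $\vec{\Omega}_i$ and $\vec{V}_{O_i}$ in terms of $\vec{\Omega}$, $\vec{\Gamma}_i$, and the constants $c_i$.

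First I would take the cross product of \eqref{jednacine2} with $r\vec{\Gamma}_i$ and add it to \eqref{jednacine1}; the $\vec{\mathbf F}_{A_i}$ contributions cancel, yielding
\[
2r\vec{\Gamma}_i\times\vec{\mathbf F}_{B_i} = I_i\dot{\vec{\Omega}}_i - I_i\vec{\Omega}_i\times\vec{\Omega} + rm_i\vec{\Gamma}_i\times\dot{\vec{V}}_{O_i} - rm_i\vec{\Gamma}_i\times(\vec{V}_{O_i}\times\vec{\Omega}).
\]
Multiplying by $\delta = (R+2r)/(2r)$ and substituting into \eqref{jednacine3} produces an equation whose LHS is $I\dot{\vec{\Omega}} + \delta\sum_i I_i\dot{\vec{\Omega}}_i + \delta r\sum_i m_i\vec{\Gamma}_i\times\dot{\vec{V}}_{O_i}$ and whose RHS is $\delta\sum_i I_i\vec{\Omega}_i\times\vec{\Omega} + \delta r\sum_i m_i\vec{\Gamma}_i\times(\vec{V}_{O_i}\times\vec{\Omega})$.

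Next, on $\mathcal{M}_c$ I would substitute $\vec{\Omega}_i = c_i\vec{\Gamma}_i + \delta\,\pr_i\vec{\Omega}$ from \eqref{omegai1} and $\vec{V}_{O_i} = r\delta\,\vec{\Omega}\times\vec{\Gamma}_i$ (which follows immediately by adding \eqref{veze1} and \eqref{veze1*}), and then differentiate these using \eqref{red2}. The identity $\vec{\Gamma}_i\times(\vec{A}\times\vec{\Gamma}_i) = \pr_i\vec{A}$ generates the projection operators that assemble the modified inertia $\mathbf{I}$ of \eqref{modI}, so that the $\dot{\vec{\Omega}}$-coefficient on the LHS collects precisely into $\mathbf{I}\dot{\vec{\Omega}}$. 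Accounting for the $\langle\vec{\Gamma}_i,\vec{\Omega}\rangle\dot{\vec{\Gamma}}_i$ and $\langle\dot{\vec{\Gamma}}_i,\vec{\Omega}\rangle\vec{\Gamma}_i$ pieces of $\dot{\vec{\Omega}}_i$ and $\dot{\vec{V}}_{O_i}$ completes this to $\dot{\vec{M}}$, modulo a single residual term $\delta\sum_i I_i c_i\dot{\vec{\Gamma}}_i = \varepsilon\vec{N}\times\vec{\Omega}$.

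The main obstacle will be the careful bookkeeping needed to see that everything on the RHS assembles into $\vec{M}\times\vec{\Omega} + \vec{N}\times\vec{\Omega}$. The crucial simplification is $\langle\dot{\vec{\Gamma}}_i,\vec{\Omega}\rangle = \varepsilon\langle\vec{\Gamma}_i\times\vec{\Omega},\vec{\Omega}\rangle = 0$, which kills several potentially problematic terms. Splitting $\vec{\Omega}_i\times\vec{\Omega}$ via \eqref{omegai1} gives a $c_i\vec{\Gamma}_i\times\vec{\Omega}$ piece yielding $\vec{N}\times\vec{\Omega}$, while its $-\langle\vec{\Gamma}_i,\vec{\Omega}\rangle\vec{\Gamma}_i\times\vec{\Omega}$ piece combines with $\vec{\Gamma}_i\times(\vec{V}_{O_i}\times\vec{\Omega}) = -r\delta\langle\vec{\Gamma}_i,\vec{\Omega}\rangle\vec{\Gamma}_i\times\vec{\Omega}$ (again via BAC-CAB) to reconstruct $\vec{M}\times\vec{\Omega}$, using $I\vec{\Omega}\times\vec{\Omega} = 0$. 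Moving the residual $\varepsilon\vec{N}\times\vec{\Omega}$ from the LHS to the RHS then produces the coefficient $(1-\varepsilon)$ in \eqref{red1}.
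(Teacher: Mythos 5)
Your overall route is essentially the paper's: you eliminate $\vec{\mathbf F}_{A_i}$ by adding \eqref{jednacine1} to the cross product of \eqref{jednacine2} with $r\vec\Gamma_i$, solve for $2r\vec\Gamma_i\times\vec{\mathbf F}_{B_i}$, substitute into \eqref{jednacine3}, and then close the system on $\mathcal M_c$ via \eqref{omegai1} and $\vec V_{O_i}=r\delta\,\vec\Omega\times\vec\Gamma_i$. The only real difference is organizational: the paper first observes that $\dot{\vec\Gamma}_i\times\vec V_{O_i}=0$ and $\vec\Gamma_i\times(\vec\Omega\times\vec V_{O_i})=\vec\Omega\times(\vec\Gamma_i\times\vec V_{O_i})$, so that the whole equation collapses to the total-derivative form $\frac{d}{dt}(\vec M+\vec N)=(\vec M+\vec N)\times\vec\Omega$ before anything is expanded, whereas you expand both sides term by term; your simplification $\langle\dot{\vec\Gamma}_i,\vec\Omega\rangle=0$ plays the role of the paper's $\dot{\vec\Gamma}_i\times\vec V_{O_i}=0$. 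Either bookkeeping works.

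There is, however, one genuine error in your write-up. The right-hand side you record after substituting into \eqref{jednacine3} is missing the term $I\vec\Omega\times\vec\Omega$, and at the end you invoke the identity $I\vec\Omega\times\vec\Omega=0$ to argue that the remaining terms already assemble into $\vec M\times\vec\Omega$. That identity is false: $I=\diag(A,B,C)$ is the inertia operator of a \emph{dynamically nonsymmetric} sphere, so $I\vec\Omega\times\vec\Omega$ is the usual Euler term and does not vanish --- it is precisely what makes the problem nontrivial. The two slips happen to cancel: since $\vec M\times\vec\Omega = I\vec\Omega\times\vec\Omega - \delta^2\sum_i (I_i+m_ir^2)\langle\vec\Gamma_i,\vec\Omega\rangle\,\vec\Gamma_i\times\vec\Omega$, the $I\vec\Omega\times\vec\Omega$ you dropped from the right-hand side is exactly the piece of $\vec M\times\vec\Omega$ you then declared to be zero, so \eqref{red1} still comes out correctly. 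But as written the argument asserts a false identity; the fix is simply to keep $I\vec\Omega\times\vec\Omega$ on the right-hand side throughout, where it is absorbed verbatim as the $I\vec\Omega$ contribution to $\vec M\times\vec\Omega$, and no vanishing claim is needed.
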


Note that the kinetic energy of the system takes the form
\[
T=\frac12 \langle \vec M,\vec\Omega\rangle+\frac12\sum_{i=1}^n I_i c_i^2.
\]

Also, since
\[
\frac{d}{dt}\vec N=\varepsilon \vec N\times \vec\Omega,
\]
the equation \eqref{red1} is equivalent to
\begin{equation}
\frac{d}{dt}(\vec M+\vec N)=(\vec M+\vec N)\times\vec\Omega.
\label{red1*}
\end{equation}

\begin{proof}[Proof of Theorem \ref{Glavna}]
From the equations \eqref{jednacine1} and \eqref{jednacine2} one have
\[
\vec{\Gamma}_i\times\vec{\mathbf F}_{B_i}=\frac1{2r}(I_i\dot{\vec{\Omega}}_i+\vec{\Omega}\times(I_i\vec{\Omega}_i))+\frac{m_i}{2}\vec{\Gamma}_i\times\dot{\vec{V}}_{O_i}+
\frac{m_i}{2}\vec{\Gamma}_i\times(\vec{\Omega}\times\vec{V}_{O_i})
\]
By plugging the last expression in the third equation of motion \eqref{jednacine3}, it becomes
\begin{equation}
\label{jednacina3}
\begin{aligned}
I\dot{\vec{\Omega}}+\vec{\Omega}\times I\vec{\Omega}=-&\sum_{i=1}^{n}\Big[\frac{R+2r}{2r}(I_i\dot{\vec{\Omega}}_i +\vec{\Omega}\times(I_i\vec{\Omega}_i))+\\
&\frac{m_i(R+2r)}{2}\vec{\Gamma}_i\times\dot{\vec{V}}_{O_i}+\frac{m_i(R+2r)}{2}\vec{\Gamma}_i\times(\vec{\Omega}\times\vec{V}_{O_i})\Big].
\end{aligned}
\end{equation}

From \eqref{veze1*}, \eqref{game1}, and \eqref{game}, we get
$\dot{\vec{\Gamma}}_i\times\vec{V}_{O_i}=0$, and, therefore
\[
\frac{d}{dt}\big(\vec{\Gamma}_i\times\vec{V}_{O_i}\big)=\vec{\Gamma}_i\times\dot{\vec{V}}_{O_i}.
\]
Also, we have
\[
\vec{\Gamma}_i\times(\vec{\Omega}\times\vec{V}_{O_i})=\vec{\Omega}\times(\vec{\Gamma}_i\times\vec{V}_{O_i}).
\]

Having in mind the last two expressions, the equation \eqref{jednacina3} becomes
\begin{equation}
\label{jednacina3a}
\begin{aligned}
\frac{d}{dt}&\Big(I\vec{\Omega}+\sum_{i=1}^{n}\big(\frac{R+2r}{2r}I_i\vec{\Omega}_i+\frac{m_i(R+2r)}{2}\vec{\Gamma}_i\times{\vec{V}}_{O_i}\big)\Big)=\\
&-\vec{\Omega}\times \Big(I\vec{\Omega}+\sum_{i=1}^{n}\big(\frac{R+2r}{2r}I_i\vec{\Omega}_i+\frac{m_i(R+2r)}{2}\vec{\Gamma}_i\times{\vec{V}}_{O_i}\big)\Big)
\end{aligned}
\end{equation}

Finally, using \eqref{omegai1}, constraints \eqref{veze2}, the definitions \eqref{parametri}, \eqref{m1}, and \eqref{n1} of parameters $\varepsilon$ and $\delta$ and
the vectors  $\vec M$ and $\vec N$, the equation \eqref{jednacina3a} takes the form \eqref{red1*}.
\end{proof}

\begin{remark}
If we formally set $\varepsilon=1$ in the system \eqref{red1}, \eqref{red2} we obtain the equation of the spherical support system introduced by
Fedorov in \cite{F1}. The system describes the rolling without slipping of a dynamically nonsymmetric sphere $\mathbf S$ over $n$ homogeneous balls $\mathbf B_1, \dots,\mathbf B_n$ of possibly different radii, but with fixed centers. It is an example of a class of
nonhamiltonian L+R systems on Lie groups with an invariant measure (see \cite{FeRCD, FJ1, Jo1}).
On the other hand, if we set $\vec N=0$, we obtain an example $\varepsilon$--modified L+R system studied in \cite{Jo2015}.
\end{remark}

Since $\langle\vec{\omega}_i,\vec{\gamma}_i\rangle=\langle\vec{\Omega}_i,\vec{\Gamma}_i\rangle$, we have that $\mathcal D$ is foliated
on invariant varieties
\[
\mathcal D_c\colon \qquad \langle\vec{\omega}_i,\vec{\gamma}_i\rangle=c_i, \qquad i=1,\dots,n, \qquad \dim\mathcal D_c=5n+6
\]
and $\mathcal M_c=\mathcal D_c/SO(3)^{n+1}$.
As a result we obtain  the following diagram
\begin{equation*}
\label{principal}
\xymatrix@R28pt@C28pt{
\mathcal D_c \,\,\ar@{->}[d]^{/SO(3)^{n+1}} \ar@{^{(}->}[r]& \,\, \mathcal D\ar@{^{}->}[d]^{/SO(3)^{n+1}} \,\, \ar@{^{(}->}[r] & TQ=(TSO(3))^{n+1}\times (TS^2)^n \ar@{^{}->}[d]^{/SO(3)^{n+1}}\\
\mathcal M_c \,\,\ar@{->}[rrd]^{\pi_c}_{\cong} \ar@{^{(}->}[r]& \,\,\mathcal M\,\,  \ar@{^{(}->}[r] & (TQ)/SO(3)^{n+1}\cong{\mathbb R}^{3(n+1)}\times (TS^2)^n \ar@{^{}->}[d]^{\pi}  \\
 &  & \mathcal N=\R^3\times \big(S^2\big)^{n}  }
\end{equation*}
which implies that $\mathcal D_c$ and $\R^3\times SO(3)^{n+1}\times\big(S^2\big)^{n}\{\Omega,\mathbf g,\mathbf g_1,\dots,\mathbf g_n,\vec\Gamma_1,\dots,\vec\Gamma_n\}$ are diffeomorphic:
\[
\mathcal D_c\cong \R^3\times SO(3)^{n+1}\times\big(S^2\big)^{n}\{\Omega,\mathbf g,\mathbf g_1,\dots,\mathbf g_n,\vec\Gamma_1,\dots,\vec\Gamma_n\}.
\]

\begin{cor} \label{Originalne}
The complete equations of  motion of the sphere $\mathbf S$ and the balls $\mathbf B_1,\dots,\mathbf B_n$ of the spherical ball bearing problem
on the invariant manifold $\mathcal D_c$ are given by
\begin{align*}
\dot{\vec{M}}&=\vec{M}\times\vec{\Omega}+(1-\varepsilon)\vec N\times\vec{\Omega},\\
\dot{\mathbf g}&=\mathbf g \Omega,\\
\dot{\mathbf g}_i&=\mathbf g\Omega_i(\vec\Omega,\vec\Gamma_i,c_i)\mathbf g_i,\\
\dot{\vec\Gamma}_i&=\varepsilon\vec\Gamma_i\times\vec\Omega, \qquad\qquad i=1,\dots,n,
\end{align*}
where $\vec M$, $\vec N$, are given by \eqref{m1} and \eqref{n1}. Here $\Omega$ and $\Omega_i(\vec\Omega,\vec\Gamma_i,c_i)$ are skew-symmetric matrices related to $\vec\Omega$ and $\vec{\Omega}_i$
 after the identification \eqref{izomorfizam}; $\vec{\Omega}_i=\vec\Omega_i(\vec\Omega,\vec\Gamma_i,c_i)$ as in the equation \eqref{omegai1}.
\end{cor}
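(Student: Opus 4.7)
The plan is to read off the corollary as a direct assembly of previously derived equations rather than to perform a fresh calculation. The first equation for $\dot{\vec M}$ and the last equations for $\dot{\vec\Gamma}_i$ are verbatim the reduced equations \eqref{red1} and \eqref{red2} established in Theorem \ref{Glavna}. Hence the new content of the corollary lies in (i) the kinematic equations for the group elements $\mathbf g$ and $\mathbf g_i$, and (ii) the assertion that the whole system is well defined on the invariant submanifold $\mathcal D_c$.

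For the first kinematic equation, I would simply rearrange the defining relation $\Omega=\mathbf g^{-1}\dot{\mathbf g}$ introduced in Section \ref{sec2}, yielding $\dot{\mathbf g}=\mathbf g\Omega$. For the $\mathbf g_i$--equation, the starting point is $\omega_i=\dot{\mathbf g}_i\mathbf g_i^{-1}$, together with the identity $\omega_i=\mathbf g\Omega_i\mathbf g^{-1}$, which follows from $\vec\omega_i=\mathbf g\vec\Omega_i$ via the isomorphism \eqref{izomorfizam}. Combining these expressions yields $\dot{\mathbf g}_i$ as an expression involving $\mathbf g$, $\Omega_i$ and $\mathbf g_i$; the factor $\Omega_i(\vec\Omega,\vec\Gamma_i,c_i)$ is then obtained by substituting the explicit affine expression \eqref{omegai1}, which is valid precisely on each level set $\langle\vec\Omega_i,\vec\Gamma_i\rangle=c_i$.

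To identify the invariant manifold $\mathcal D_c$, I would appeal to Proposition \ref{go} together with the orthogonality identity $\langle\vec\omega_i,\vec\gamma_i\rangle=\langle\vec\Omega_i,\vec\Gamma_i\rangle$, which is immediate because $\mathbf g$ is a rotation. This ensures that every $\mathcal D_c$ is preserved by the flow. The commutative diagram preceding the statement then identifies $\mathcal D_c/SO(3)^{n+1}$ with $\mathcal M_c$ and, through the bijection $\pi_c\colon\mathcal M_c\to\mathcal N$, allows one to lift the reduced dynamics on $\mathcal N$ back to a well defined flow on $\mathcal D_c\cong \R^3\times SO(3)^{n+1}\times (S^2)^n$, producing exactly the displayed system.

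There is no real obstacle: the corollary is essentially a bookkeeping consolidation of the kinematic relations from Section \ref{sec2} and the reduced equations of Theorem \ref{Glavna}. The only point deserving modest care is checking that the substitution \eqref{omegai1} for $\vec\Omega_i$ is compatible with the dynamics, i.e.\ that both the nonholonomic constraints \eqref{veze1}--\eqref{veze1*} and the conservation laws $\langle\vec\Omega_i,\vec\Gamma_i\rangle=c_i$ continue to hold along the flow; the first is built into the Lagrange--d'Alembert equations \eqref{L1} and the second is exactly the content of Proposition \ref{go}, so the verification is immediate.
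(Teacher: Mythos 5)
Your proposal is correct and follows essentially the same route as the paper, which offers no separate proof of Corollary \ref{Originalne} but presents it as a direct assembly of the kinematic definitions $\Omega=\mathbf g^{-1}\dot{\mathbf g}$, $\omega_i=\dot{\mathbf g}_i\mathbf g_i^{-1}$, the substitution \eqref{omegai1} on the level set $\langle\vec\Omega_i,\vec\Gamma_i\rangle=c_i$, and the reduced equations of Theorem \ref{Glavna}, exactly as you describe. One small point your derivation actually exposes: combining $\dot{\mathbf g}_i=\omega_i\mathbf g_i$ with $\omega_i=\mathbf g\Omega_i\mathbf g^{-1}$ yields $\dot{\mathbf g}_i=\mathbf g\,\Omega_i(\vec\Omega,\vec\Gamma_i,c_i)\,\mathbf g^{-1}\mathbf g_i$, so the displayed formula in the corollary appears to be missing a factor $\mathbf g^{-1}$; you should state the final formula explicitly rather than leaving it as ``an expression involving $\mathbf g$, $\Omega_i$ and $\mathbf g_i$.''
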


\section{The associated system on $\R^3\times Sym(3)$ and an invariant measure}\label{sec5}

Let
\[
\Gamma=-\delta^2\sum_{i=1}^n(I_i+m_ir^2)\pr_i
\]
be the symmetric operator using which the definition of the modified inertia operator $\mathbf I$   \eqref{modI} can be rewritten as:
\[
\mathbf I=I-\Gamma, \qquad \Gamma=\delta^2\sum_{i=1}^n(I_i+m_ir^2)\big(\vec{\Gamma}_i\otimes\vec{\Gamma}_i-\mathbf E\big), \qquad \mathbf E=\diag(1,1,1).
\]

Along the flow of the system, $\Gamma$ satisfies the equation
\begin{equation}\label{GAMA}
\frac{d}{dt}\Gamma=\varepsilon[\Gamma,\Omega],
\end{equation}
where $\Omega$ is the skew-symmetric matrix that corresponds to the angular velocity $\vec\Omega$ via isomorphism \eqref{izomorfizam}.

Let us consider a special case when $c_1=0,\dots,c_n=0$, i.e., the invariant manifold $\mathcal M_0$. This means that there are no twisting of the balls, i.e. the vectors $\vec{\Omega}_i$ and $\vec{\Gamma}_i$ are orthogonal to each other.
However, note that this conditions are not nonholonomic constraints, but the first integrals of motion.

As a result we obtain the associated system
\begin{equation}\label{eqc0}
\begin{aligned}
\dot{\vec{M}}&=\vec{M}\times\vec{\Omega},   \qquad \vec M=\mathbf I\vec\Omega=I\vec\Omega-\Gamma\vec\Omega,\\
\dot{\Gamma}&=\varepsilon[\Gamma,\Omega]
\end{aligned}
\end{equation}
on the space $\R^3\times Sym(3)$, where $Sym(3)$ are $3\times 3$ symmetric matrices.
The system belongs to the class of $\varepsilon$--modified L+R systems studied in \cite{Jo2015}.

Let $d\Omega$ and $d\Gamma$ be the standard measures on $\R^3\{\vec\Omega\}$ and $Sym(3)\{\Gamma\}$.
The system \eqref{eqc0} possesses the invariant measure
$\mu(\Gamma) d\Omega \wedge d\Gamma$
with the density $\mu(\Gamma)=\sqrt{\det(\mathbf I)}$ (see Theorem 4, \cite{Jo2015})
Therefore, $\mu=\sqrt{\det(\mathbf I)}$ is a natural candidate for the density of an invariant measure of the system
\eqref{red1}, \eqref{red2} when the constants $c_i$ are different from zero.
Indeed, we have

\begin{thm}\label{mera}
For arbitrary values of parameters $c_i$, the reduced system \eqref{red1}, \eqref{red2} has the invariant measure
\begin{equation}\label{MERA}
\mu(\vec\Gamma_1,\dots,\vec\Gamma_n)d\Omega\wedge \sigma_1\wedge \dots \wedge \sigma_n, \qquad \mu=\sqrt{\det(\mathbf I)}=\sqrt{\det(I-\Gamma)},
\end{equation}
where $d\Omega$ and $\sigma_i$ are the standard measures on $\R^3\{\vec\Omega\}$ and $S^2\{\vec\Gamma_i\}$, $i=1,\dots,n$.
\end{thm}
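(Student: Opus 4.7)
The strategy is to verify directly that the Lie derivative of $\mu\,\Omega_{\mathcal N}$ along the flow $X$ on $\mathcal N$ prescribed by \eqref{red1}, \eqref{red2} vanishes, where $\Omega_{\mathcal N} = d\Omega \wedge \sigma_1 \wedge \cdots \wedge \sigma_n$. This is equivalent to the pointwise identity
\[
 X(\ln\mu) + \mathrm{div}_{\Omega_{\mathcal N}}(X) = 0.
\]
Since $\Omega_{\mathcal N}$ is a product form, the divergence splits as the Euclidean divergence of $\dot{\vec\Omega}$ in $\R^3\{\vec\Omega\}$ plus the $\sigma_i$-divergences of each $\dot{\vec\Gamma}_i$ on $S^2$. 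For the latter, with $\vec\Omega$ held as a parameter, $\vec\Gamma_i \mapsto \varepsilon\,\vec\Gamma_i \times \vec\Omega$ is the infinitesimal generator of a rotation of $S^2$ about the axis $\vec\Omega$, hence divergence-free with respect to $\sigma_i$; only the $\vec\Omega$-divergence remains.

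Next, substituting $\vec M = \mathbf I \vec\Omega$ into \eqref{red1} and using $\dot{\mathbf I} = -\varepsilon[\Gamma,\Omega]$ from \eqref{GAMA} produces
\[
 \mathbf I\dot{\vec\Omega} = A\,\vec\Omega\times\vec\Omega + (1-\varepsilon)\,\vec N\times\vec\Omega, \qquad A = \varepsilon I + (1-\varepsilon)\mathbf I,
\]
so I decompose $\dot{\vec\Omega} = Y_0 + Y_1$ with $Y_1 = (1-\varepsilon)\mathbf I^{-1}(\vec N\times\vec\Omega)$ collecting all $\vec N$-dependent terms. The field $Y_1$ leaves every $\vec\Gamma_i$ unchanged, so it annihilates $\mu$; and since $\vec N$, $\mathbf I^{-1}$ do not depend on $\vec\Omega$ while the Jacobian of $\vec N\times\vec\Omega$ with respect to $\vec\Omega$ is antisymmetric in its two indices, its contraction with the symmetric operator $\mathbf I^{-1}$ vanishes, giving $\mathrm{div}_\Omega(Y_1) = 0$. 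The theorem reduces to the $\vec N=0$ identity
\[
 X_0(\ln\mu) + \mathrm{div}_\Omega(Y_0) = 0.
\]

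Both sides can be expressed in terms of the single commutator $[\mathbf I^{-1},\Gamma]$. From $\mu=\sqrt{\det\mathbf I}$ and $\dot{\mathbf I}=-\varepsilon[\Gamma,\Omega]$, cyclicity of the trace yields
\[
 X_0(\ln\mu) = \tfrac12\tr(\mathbf I^{-1}\dot{\mathbf I}) = -\tfrac{\varepsilon}{2}\tr\bigl([\mathbf I^{-1},\Gamma]\,\Omega\bigr).
\]
For $\mathrm{div}_\Omega(Y_0)$ with $Y_0 = \mathbf I^{-1}(A\vec\Omega\times\vec\Omega)$, I expand the Jacobian of $A\vec\Omega\times\vec\Omega$ in $\vec\Omega$ and contract it with $\mathbf I^{-1}$; the piece arising from differentiating the second copy of $\vec\Omega$ in the cross product yields an antisymmetric matrix and drops out, leaving a contribution proportional to the axial vector of the antisymmetric part of $A\mathbf I^{-1}$, namely $\tfrac12[A,\mathbf I^{-1}]$. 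Since $\mathbf I$ commutes with $\mathbf I^{-1}$ and $I = \mathbf I + \Gamma$, the decomposition $A = \varepsilon I + (1-\varepsilon)\mathbf I$ gives
\[
 [A,\mathbf I^{-1}] = \varepsilon[I,\mathbf I^{-1}] = -\varepsilon[\mathbf I^{-1},\Gamma],
\]
after which the identification \eqref{izomorfizam} gives $\mathrm{div}_\Omega(Y_0) = +\tfrac{\varepsilon}{2}\tr([\mathbf I^{-1},\Gamma]\,\Omega)$, cancelling $X_0(\ln\mu)$.

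The main technical obstacle is this last step, the index bookkeeping in $\mathrm{div}_\Omega(Y_0)$: one must correctly identify which piece of the Jacobian of $A\vec\Omega\times\vec\Omega$ vanishes against the symmetric tensor $\mathbf I^{-1}$, and then recognise the surviving piece as the axial vector of $\tfrac12[A,\mathbf I^{-1}]$. The algebraic key is the decomposition $A = \varepsilon I + (1-\varepsilon)\mathbf I$, which isolates the $\varepsilon$-multiple of $[I,\mathbf I^{-1}] = -[\mathbf I^{-1},\Gamma]$ that precisely matches $X_0(\ln\mu)$; once both sides are written through this commutator, the cancellation is automatic.
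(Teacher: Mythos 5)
Your proof is correct and follows essentially the same route as the paper: both reduce the claim to the divergence identity $X(\ln\mu)+\mathrm{div}(X)=0$, discard the skew-symmetric pieces of the $\vec\Omega$-Jacobian against the symmetric $\mathbf I^{-1}$, and cancel the surviving term $\tfrac{\varepsilon}{2}\mathrm{tr}\bigl(\mathbf I^{-1}[\Gamma,\Omega]\bigr)$ against $\dot\mu/\mu$. The only cosmetic differences are that the paper handles the sphere factors by extending the system to $\R^{3n+3}$ and factoring $d\Gamma_i=\alpha_i\wedge\sigma_i$, whereas you observe directly that $\vec\Gamma_i\mapsto\varepsilon\,\vec\Gamma_i\times\vec\Omega$ is a rotation field on $S^2$ and hence $\sigma_i$-divergence-free, and that your decomposition $A=\varepsilon I+(1-\varepsilon)\mathbf I$ with $[A,\mathbf I^{-1}]=\varepsilon[I,\mathbf I^{-1}]$ repackages the paper's use of Lemma \ref{lemica}(i) together with the identity $\mathrm{tr}\bigl(\mathbf I^{-1}[\mathbf I,\Omega]\bigr)=0$.
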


The proof of the Theorem we are going to present is a variant of a corresponding proof for $\varepsilon$--modified L+R systems.
It is given below for the completeness of the exposition.
In what follows we use

\begin{lem}\label{lemica}
Let $A$ be a symmetric matrix and let $\vec\Omega\in\R^3$ and $\Omega\in so(3)$ be related by \eqref{izomorfizam}. Then:
\begin{itemize}
\item[(i)] the symmetric part of the matrix $\partial\big(A\vec{\Omega}\times\vec{\Omega}\big)/{\partial\vec{\Omega}}$ is equal to $\frac{1}{2}[A,\Omega]$;
\item[(ii)] $A\vec{\Omega}\times\vec{\Omega}=[A,\Omega]\vec{\Omega}$.
\end{itemize}
\end{lem}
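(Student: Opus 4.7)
The statement is a purely algebraic lemma about the hat map $\vec a \mapsto \hat a$ from $\R^3$ to $so(3)$ given by \eqref{izomorfizam}, together with routine identities tying the cross product, matrix commutators, and the derivative of a quadratic vector-valued map. My plan is to record the one piece of notation that makes everything transparent, dispatch (ii) by a one-line commutator expansion, and then compute the Jacobian in (i) directly and split it into its symmetric and skew-symmetric parts.

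First I would record the immediate consequence of \eqref{izomorfizam}, namely that $\Omega\vec v = \vec\Omega\times\vec v$ for every $\vec v\in\R^3$, and more generally $\hat{\vec w}\,\vec v = \vec w\times\vec v$ for any vector $\vec w$. A quick index computation with $(\Omega\vec v)_i = \Omega_{ij}v_j = -\varepsilon_{ijk}\Omega_k v_j = \varepsilon_{ikj}\Omega_k v_j$ confirms that the sign works out correctly.

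For part (ii) I would simply expand the commutator:
\[
[A,\Omega]\vec\Omega = A\Omega\vec\Omega - \Omega A\vec\Omega.
\]
The first summand vanishes because $\Omega\vec\Omega = \vec\Omega\times\vec\Omega = 0$, while the second equals $-\vec\Omega\times(A\vec\Omega) = A\vec\Omega\times\vec\Omega$ by antisymmetry of the cross product, giving the claim at once.

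For part (i) I would differentiate $F(\vec\Omega) = A\vec\Omega\times\vec\Omega$ by the product rule for the cross product. Its directional derivative in direction $\vec v$ is
\[
dF(\vec\Omega)(\vec v) = (A\vec v)\times\vec\Omega + (A\vec\Omega)\times\vec v = -\Omega A\vec v + \widehat{A\vec\Omega}\,\vec v,
\]
so the Jacobian matrix is $J = -\Omega A + \widehat{A\vec\Omega}$. The hat matrix $\widehat{A\vec\Omega}$ is skew-symmetric by construction and contributes nothing to the symmetric part $\tfrac12(J+J^T)$. For the remaining term, the hypothesis $A^T = A$ together with $\Omega^T = -\Omega$ gives $(\Omega A)^T = A^T\Omega^T = -A\Omega$, so the symmetric part of $-\Omega A$ is $\tfrac12(-\Omega A + A\Omega) = \tfrac12[A,\Omega]$. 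This is exactly the assertion in (i).

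There is no substantive obstacle: the lemma is a linear-algebraic bookkeeping fact and the only point that needs real care is the sign convention in \eqref{izomorfizam} guaranteeing $\Omega\vec v = +\vec\Omega\times\vec v$ rather than its negative; once that is settled, both assertions reduce essentially to a single line.
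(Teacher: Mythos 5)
Your proof is correct: the sign check $\Omega\vec v=\vec\Omega\times\vec v$ under the convention $a_{ij}=-\varepsilon_{ijk}a_k$ is right, part (ii) follows from $\Omega\vec\Omega=0$ exactly as you write, and in part (i) the Jacobian $-\Omega A+\widehat{A\vec\Omega}$ has symmetric part $\tfrac12(-\Omega A+A\Omega)=\tfrac12[A,\Omega]$ since the hat matrix is skew and $(-\Omega A)^T=A\Omega$. The paper states this lemma without proof (it is used as a known auxiliary fact from the literature on L+R systems), so there is nothing to compare against; your computation is a complete and valid verification of both parts.
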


\begin{proof}[Proof of Theorem \ref{mera}]
We can consider the system
\begin{align}
\label{pros1} &\frac{d}{dt}\big({\mathbf I\vec\Omega}\big)=\mathbf I\vec{\Omega}\times\vec{\Omega}+(1-\varepsilon)\vec N\times\vec{\Omega},\qquad \mathbf I=I-\Gamma,\\
\label{pros2} &\frac{d}{dt}\vec\Gamma_i=\varepsilon\vec\Gamma_i\times\vec\Omega, \qquad\qquad\qquad\qquad\qquad i=1,\dots,n,
\end{align}
as extended in the Euclidean space $\R^{3n+3}\{\vec\Omega,\vec\Gamma_1,\dots,\vec\Gamma_n\}$ as well.
The extended system also has first integrals $\langle\vec\Gamma_i,\vec\Gamma_j\rangle=\gamma_{ij}$. In particular, by taking $\gamma_{ii}=1$, $i=1,\dots,n$,
we get that the reduced system is the restriction of the extended system \eqref{pros1}, \eqref{pros2} from $\R^{3n+3}$ to the invariant variety $\mathcal N$.

Therefore, it is sufficient to prove that the extended system preserves the measure
\[
\nu=\mu(\vec\Gamma_1,\dots,\vec\Gamma_n)d\Omega\wedge d\Gamma_1\wedge \dots \wedge d\Gamma_n, \qquad \mu=\sqrt{\det(\mathbf I)},
\]
where $d\Gamma_i$ is the standard measure in $\R^3\{\vec\Gamma_i\}$.

This is a standard construction: if a system has an invariant measure, then the restriction of the system to an invariant manifold also has an invariant measure induced by the
measure from the ambient space. Let
\[
X=(\dot{\vec{\Omega}},\dot{\vec\Gamma}_1,\dots,\dot{\vec\Gamma}_n)
\]
be the vector field on $\R^{3n+3}$ defined by the equations \eqref{pros1}, \eqref{pros2} and assume that the Lie derivative $\mathcal L_X$ of $\nu$ vanishes.
It is well known that for $\vec\Gamma_i\ne 0$, the volume form in $\R^3\{\vec\Gamma_i\}$ can be written as
$d\Gamma_i=\alpha_i\wedge \sigma_i$, where $\sigma_i$
is the standard measure on the unit sphere and
\[
\alpha_i=\vert\vec\Gamma_i\vert^2 d(\vert \vec\Gamma_i\vert)=\frac13 d\langle\vec\Gamma_i,\vec\Gamma_i\rangle^{\frac32}, \qquad i=1,\dots,n.
\]

Since $\mathcal L_X(\alpha_i)=0$, we have
\[
\mathcal L_X(\nu)=const \cdot \alpha_1 \wedge \dots \wedge \alpha_n \wedge
\mathcal L_X\big(\mu\, d\Omega \wedge \sigma_1\wedge \dots \wedge \sigma_n\big)=0,
\]
$\vec\Gamma_i\ne 0$, $i=1,\dots,n$. Therefore, the reduced system preserves the measure \eqref{MERA}.

The equation $\mathcal L_X(\nu)=0$ in $\R^{3n+3}$ can be written in the equivalent form
\begin{equation}\label{uslov}
\dot{\mu} +\mu\dv(X)=\dot{\mu}
+\mu\tr\frac{\partial{\dot{\vec\Omega}}}{\partial\vec\Omega} + \mu\sum_{i=1}^n\tr\frac{\partial{\dot{\vec\Gamma}_i}}{\partial\vec\Gamma_i}=0.
\end{equation}

We have the following equalities
\begin{align*}
\dot{\mathbf I}\vec{\Omega}&+{\mathbf I}\dot{\vec{\Omega}}=I\vec{\Omega}\times\vec{\Omega}-\Gamma\vec{\Omega}\times\vec{\Omega}+(1-\varepsilon)\vec N\times\vec{\Omega},\\
{\mathbf I}\dot{\vec{\Omega}}&=I\vec{\Omega}\times\vec{\Omega}-\Gamma\vec{\Omega}\times\vec{\Omega}+\varepsilon[\Gamma,\Omega]\vec{\Omega}+(1-\varepsilon)\vec N\times\vec{\Omega}\\
&=I\vec{\Omega}\times\vec{\Omega}-(1-\varepsilon)\Gamma\vec{\Omega}\times\vec{\Omega}+(1-\varepsilon)\vec N\times\vec{\Omega}, \qquad \text{(item (ii) of Lemma \ref{lemica})}.
\end{align*}

Thus,
\begin{align*}
&\dot{\vec{\Omega}}={\mathbf I}^{-1}\big(I\vec{\Omega}\times\vec{\Omega}-(1-\varepsilon)\Gamma\vec{\Omega}\times\vec{\Omega}+(1-\varepsilon)\vec N\times\vec{\Omega}\big),\\
&\tr\frac{\partial{\dot{\vec\Omega}}}{\partial\vec\Omega}=\tr\Big({\mathbf I}^{-1}\frac{\partial}{\partial\vec{\Omega}}\big(I\vec{\Omega}\times\vec{\Omega}-(1-\varepsilon)\Gamma\vec{\Omega}\times\vec{\Omega}+(1-\varepsilon)\vec N\times\vec{\Omega}\big)\Big).
\end{align*}

The matrix $\partial(\vec N\times\vec\Omega)/\partial\vec\Omega$ is skew-symmetric.
Since ${\mathbf I}^{-1}$ is symmetric, only the symmetric part of the expression in parenthesis in the last equation matters.
Whence, using item (i) of Lemma \ref{lemica}, we get
\begin{equation}\label{div1}
\begin{aligned}
\tr\frac{\partial{\dot{\vec\Omega}}}{\partial\vec\Omega}&=\tr\Big({\mathbf I}^{-1}\big(\frac{1}{2}[I,\Omega]-\frac{1-\varepsilon}{2}[\Gamma,\Omega]\big)\Big)\\
&=\tr\Big({\mathbf I}^{-1}\frac{1}{2}[{\mathbf I},\Omega]+\frac{\varepsilon}{2}{\mathbf I}^{-1}[\Gamma,\Omega]\Big)=\frac{\varepsilon}{2}\tr({\mathbf I}^{-1}[\Gamma,\Omega]).
\end{aligned}
\end{equation}

On the other hand
\begin{equation}\label{div2}
\begin{aligned}
\dot{\mu}=&\frac{1}{2\sqrt{\det ({\mathbf I})}}\frac{d}{dt}\det({\mathbf I})=
\frac{1}{2\sqrt{\det ({\mathbf I})}}\det({\mathbf I})\tr\big({\mathbf I}^{-1}\frac{d}{dt}\big(I-\Gamma\big)\big)\\
=&         -\frac{1}{2}\mu\tr({\mathbf I}^{-1}\varepsilon[\Gamma,\Omega]),
\end{aligned}
\end{equation}
Here we used a well-known formula $\frac{d}{dt}\det({\mathbf I})=\det({\mathbf I})\tr({\mathbf I}^{-1}\dot{\mathbf I})$.

Since the matrices ${\partial{\dot{\vec\Gamma}_i}}/{\partial\vec\Gamma_i}$ are skew symmetric and have zero traces,
the equations \eqref{div1} with \eqref{div2} imply the required condition \eqref{uslov}.
\end{proof}

Note that the existence of an invariant measure for nonholomic problems is well studied in many classical problems \cite{BM2002, BMB2013}. After Kozlov's theorem on obstruction to the existence of an
invariant measure for the variant of the classical Suslov problem (e.g., see \cite{BT2020, FJ1}) on Lie algebras \cite{Kozlov},  general existence statements
for nonholonomic systems with  symmetries are obtained in  \cite{ZenkovBloch} and \cite{FGM2015}.

A closely related problem is the integrability of the nonholonomic systems \cite{AKN}. Here we have the following statement.

\begin{prop}
The system \eqref{red1}, \eqref{red2} always has the following first integrals
\[
F_1=\frac12 \langle \vec M,\vec\Omega\rangle, \quad F_2=\langle \vec M+\vec N, \vec M+\vec N\rangle, \quad
F_{ij}=\langle \vec\Gamma_i, \vec\Gamma_j\rangle, \quad 1\le i < j\le n.
\]
\end{prop}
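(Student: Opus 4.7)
The plan is to verify each of the three families of first integrals separately, since they have different origins in the reduced system. The quantities $F_{ij}=\langle \vec\Gamma_i,\vec\Gamma_j\rangle$ are essentially already done: Proposition \ref{prva} shows their conservation using only the kinematic equation \eqref{game1} (the diagonal case $i=j$ is the unit-sphere constraint, so only the $i<j$ components carry genuinely new content). For $F_2=\langle \vec M+\vec N,\vec M+\vec N\rangle$ the equivalent form \eqref{red1*} of equation \eqref{red1} states $\tfrac{d}{dt}(\vec M+\vec N)=(\vec M+\vec N)\times\vec\Omega$, so a direct differentiation combined with the scalar triple-product identity immediately gives
\[
\frac{d}{dt}F_2 \;=\; 2\langle \vec M+\vec N,\,(\vec M+\vec N)\times\vec\Omega\rangle \;=\; 0.
\]

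The energy integral $F_1=\tfrac12\langle \vec M,\vec\Omega\rangle$ requires a short calculation. Writing $F_1=\tfrac12\langle \mathbf I\vec\Omega,\vec\Omega\rangle$ and using symmetry of $\mathbf I$, one has
\[
\frac{d}{dt}\langle \vec M,\vec\Omega\rangle \;=\; \langle \dot{\mathbf I}\vec\Omega,\vec\Omega\rangle + 2\langle \vec M,\dot{\vec\Omega}\rangle.
\]
Substituting $\mathbf I\dot{\vec\Omega}=\dot{\vec M}-\dot{\mathbf I}\vec\Omega$ and using $\langle\vec\Omega,\dot{\vec M}\rangle=0$ (which follows from \eqref{red1} together with the triple product), the second term contributes $-2\langle\dot{\mathbf I}\vec\Omega,\vec\Omega\rangle$, and the sum collapses to $-\langle\dot{\mathbf I}\vec\Omega,\vec\Omega\rangle$. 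Now $\mathbf I=I-\Gamma$ and $\dot\Gamma=\varepsilon[\Gamma,\Omega]$ by \eqref{GAMA} give $\dot{\mathbf I}=-\varepsilon[\Gamma,\Omega]$. Applying item (ii) of Lemma \ref{lemica} to the symmetric operator $\Gamma$ yields $[\Gamma,\Omega]\vec\Omega=\Gamma\vec\Omega\times\vec\Omega$, whose inner product with $\vec\Omega$ vanishes by the triple product. Hence $F_1$ is conserved.

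I do not anticipate a serious obstacle; the only step that is not a one-line triple-product argument is the $F_1$ computation, and Lemma \ref{lemica}(ii) does the essential work there. As a consistency check, $F_1$ differs by the additive constant $\tfrac12\sum_i I_i c_i^2$ from the total kinetic energy $T$ displayed just after Theorem \ref{Glavna}, so its conservation also reflects the standard principle that a natural nonholonomic system with linear-in-velocity constraints preserves its energy.
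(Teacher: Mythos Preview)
Your proof is correct. The paper states this proposition without proof, relying on the preceding discussion: $F_{ij}$ is Proposition~\ref{prva}, $F_2$ is an immediate consequence of \eqref{red1*}, and $F_1$ differs from the kinetic energy $T$ by the constant $\tfrac12\sum_i I_i c_i^2$ (as noted just after Theorem~\ref{Glavna}), so its conservation is inherited from energy conservation of the nonholonomic system.

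Your treatment of $F_{ij}$ and $F_2$ is exactly this implicit argument. For $F_1$ you instead give a self-contained computation using \eqref{GAMA} and Lemma~\ref{lemica}(ii); this is a bit longer than invoking energy conservation but has the virtue of working entirely within the reduced equations \eqref{red1}--\eqref{red2} on $\mathcal N$, without appealing back to the Lagrange--d'Alembert principle on $TQ$. You correctly flag the energy interpretation as a consistency check, so both routes are present in your write-up.
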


Thus, in the special case $n=1$, we have the 5-dimensional phase space $\mathcal N=\R^3\times S^2\{\vec\Omega,\vec\Gamma_1\}$, and the system has
two first integrals and an invariant measure. For the integrability,  one needs to find a third independent first integral.
We will study integrability in the spherical ball bearing problems in a separate paper.
Also, it would be interesting to study the appropriate nonholonomic systems in arbitrary dimension $\R^m$, $m>3$ (e.g., see \cite{FK1995, FJ1, Jo1, Naranjo2019b, Jov2019}),
or the systems where the homogeneous balls $\mathbf B_i$ are replaced by the systems of the form (ball + gyroscope), which satisfy the Zhukovskii conditions (see \cite{Zhuk1893, DGJ}).

\section{Planar system - the three balls bearings problem}\label{sec6}

\subsection{Definition of the planar three balls bearing problem}
Consider the limit, when the radii of the spheres $\mathbf S_0$ and $\mathbf S$ both tend to infinity.
For simplicity, we consider the case $n=3$.
As a result, we obtain rolling without slipping of three homogeneous balls $\mathbf B_1,\mathbf B_2,\mathbf B_3$
of the radius $r$ and masses $m_1, m_2, m_3$ over the fixed plane $\Sigma_0$, together with the moving plane $\Sigma$
of the mass $m$ that is placed over the balls, such that there is no slipping between the balls and moving plane.
We will refer to the system as \emph{the planar three balls bearing problem}.
Note that all considerations of the Section can be easily adopted for the case of the
planar ball bearing with rolling of $n$ homogeneous balls.

Let $O_0$ be the fixed point of the plane $\Sigma_0$,
$O, O_1, O_2, O_3$ be the centers of mass of the plane $\Sigma$ and the balls $\mathbf B_1,\mathbf B_2,\mathbf B_3$ respectively. Let also
\[
O_0\vec{\mathbf e}^0_1,\vec{\mathbf e}^0_2,\vec{\mathbf e}^0_3, \qquad O\vec{\mathbf e}_1,\vec{\mathbf e}_2,\vec{\mathbf e}_3,
\qquad O_i\vec{\mathbf e}^i_1,\vec{\mathbf e}^i_2,\vec{\mathbf e}^i_3, ,
\]
be positively oriented reference frames rigidly attached to the fixed plane $\Sigma_0$, the moving plane $\Sigma$, and the ball $\mathbf B_i$ ($i=1,2,3$), respectively.
Here $\vec{\mathbf e}_3=\vec{\mathbf e}^0_3$ is the unit vector orthogonal to $\Sigma$ and $\Sigma_0$.

In the fixed reference frame, the positions of the points $O$, $O_1$, $O_2$, and $O_3$ are respectively given by
\[
O(x,y,2r), \qquad O_1(x_1,y_1,r), \qquad O_2(x_2,y_2,r), \qquad O_3(x_3,y_3,r).
\]

We denote by $\mathbf g\in SO(2)\subset SO(3)$  the rotation matrix that maps $O\vec{\mathbf e}_1,\vec{\mathbf e}_2,\vec{\mathbf e}_3$ to  $O_0\vec{\mathbf e}^0_1,\vec{\mathbf e}^0_2\vec{\mathbf e}_3^0$, and by $\mathbf g_i\in SO(3)$ the matrix that maps the moving frame $O_i\vec{\mathbf e}^i_1,\vec{\mathbf e}^i_2,\vec{\mathbf e}^i_3$
to the fixed frame $O_0\vec{\mathbf e}^0_1,\vec{\mathbf e}^0_2,\vec{\mathbf e}^0_3$, $i=1,2,3$. As above,
the skew-symmetric matrices
\[
\omega=\dot{\mathbf g}{\mathbf g}^{-1}, \qquad \omega_i=\dot{\mathbf g}_i \mathbf g_i^{-1},
\]
after the identification \eqref{izomorfizam}, correspond to the angular velocities $\vec{\omega}$ and  $\vec\omega_i$
of the plane $\Sigma$ and  the ball $\mathbf B_i$ relative to the fixed coordinate system. Note that
\[
\mathbf g=
\begin{pmatrix}
\cos\varphi & -\sin\varphi & 0 \\
\sin\varphi & \cos\varphi & 0 \\
0 &  0 & 1
\end{pmatrix}, \quad
\omega=
\begin{pmatrix}
0 &  - \dot\varphi & 0 \\
\dot\varphi & 0 & 0 \\
0 &  0 & 0
\end{pmatrix},
\quad \text{and} \quad \vec\omega=(0,0,\dot\varphi).
\]

Then the configuration space of the planar three balls bearing problem is
\[
Q=SO(3)\times SO(3)\times SO(3) \times \R^2\times SO(2)\times (\R^2)^3\{\mathbf g_1,\mathbf g_2,\mathbf g_3,x,y,\varphi,x_1,y_1,x_2,y_2,x_3,y_3\},
\]
while the kinetic energy is
\[
T=\frac12 Iv_\varphi^2+\frac12 m\langle \vec{v}_O,\vec{v}_O\rangle+\frac12\sum_{i=1}^3 I_i \langle \vec{\omega}_i,\vec{\omega}_i\rangle+\frac12 \sum_{i=1}^3 m_i \langle \vec{v}_{O_i},\vec{v}_{O_i}\rangle,
\]
where $\diag(I_i,I_i,I_i)$ is the inertia operator of the ball $\mathbf B_i$, $i=1,2,3$, $I$ is the moment of inertia of the plane $\Sigma$ for the $O\vec{\mathbf e}_3$-axis through the mass centre $O$.

\begin{pspicture}(12,7)
\pscircle[linecolor=black,fillstyle=solid, fillcolor=gray!20](3.5,3){1}
\psellipticarc[linestyle=dashed](3.5,3)(1,0.2){0}{180}
\psellipticarc(3.5,3)(1,0.2){180}{360}
\psdot[dotsize=2pt](3.5,3)\uput[0](3.5,3.2){$O_1$}
\psdot[dotsize=1.5pt](3.5,4)\uput[0](3.5,4.2){$A$}
\psdot[dotsize=1.5pt](3.5,2)\uput[0](3.5,1.8){$A_1$}

\pscircle[linecolor=black, fillstyle=solid, fillcolor=gray!20](7,3){1}
\psellipticarc[linestyle=dashed](7,3)(1,0.2){0}{180}
\psellipticarc(7,3)(1,0.2){180}{360}
\psdot[dotsize=2pt](7,3)\uput[0](7,3.2){$O_2$}
\psdot[dotsize=1.5pt](7,4)\uput[0](7,4.2){$B$}
\psdot[dotsize=1.5pt](7,2)\uput[0](7,1.8){$B_1$}

\pscircle[linecolor=black, fillstyle=solid, fillcolor=gray!20](5.5,5){0.8}
\psellipticarc[linestyle=dashed](5.5,5)(.8,0.2){0}{180}
\psellipticarc(5.5,5)(0.8,0.2){180}{360}
\psdot[dotsize=2pt](5.5,5)\uput[0](5.5,5.2){$O_3$}
\psdot[dotsize=1.5pt](5.5,5.8)\uput[0](5.5,6){$C$}
\psdot[dotsize=1.5pt](5.5,4.2)\uput[0](5.5,4){$C_1$}

\psline[linecolor=black,linewidth=0.01cm](1,1)(10,1)
\psline[linecolor=black, linewidth=0.01cm](1,1)(1.60,3.5)
\psline[linecolor=black, linewidth=0.01cm, linestyle=dashed](1.60,3.5)(2,5.2)
\psline[linecolor=black, linewidth=0.01cm, linestyle=dashed](2,5.2)(4.7,5.2)
\psline[linecolor=black, linewidth=0.01cm, linestyle=dashed](6.3,5.2)(9,5.2)
\psline[linecolor=black, linewidth=0.01cm, linestyle=dashed](9,5.2)(9.39,3.5)
\psline[linecolor=black, linewidth=0.01cm,](9.39,3.5)(10,1)

\psline[linecolor=black,linewidth=0.02cm](1,3.5)(10,3.5)
\psline[linecolor=black, linewidth=0.02cm](1,3.5)(2,6)
\psline[linecolor=black, linewidth=0.02cm](2,6)(9,6)
\psline[linecolor=black, linewidth=0.02cm](9,6)(10,3.5)
\uput[0](1.1,3.7){$\Sigma$}
\uput[0](2,0.5){{\sc Figure 2}. Planar three balls bearing problem}
\end{pspicture}

Let $A_1$ and $A$ be the points of contact of $\mathbf B_1$ with
fixed plane $\Sigma_0$ and with plane $\Sigma$, $B_1$  and $B$ be the contact points of $\mathbf B_2$, and $C_1$ and $C$ be the contact points of $\mathbf B_3$ with those planes. We have the following nonholonomic constraints written in the fixed reference frame $O_0\vec{\mathbf e}^0_1,\vec{\mathbf e}^0_2,\vec{\mathbf e}^0_3$:
\begin{equation}\label{noncon}
\begin{aligned}
\vec{v}_{O_1}&-r\vec{\omega}_1\times\vec{\gamma}=0,\ \vec{v}_{O_2}-r\vec{\omega}_2\times\vec{\gamma}=0,\ \vec{v}_{O_3}-r\vec{\omega}_3\times\vec{\gamma}=0,\\
\vec{v}_{O_1}&+r\vec{\omega}_1\times\vec{\gamma}=\vec{v}_O+\vec{\omega}\times\overrightarrow{OA},\\
\vec{v}_{O_2}&+r\vec{\omega}_2\times\vec{\gamma}=\vec{v}_O+\vec{\omega}\times\overrightarrow{OB},\\
\vec{v}_{O_3}&+r\vec{\omega}_3\times\vec{\gamma}=\vec{v}_O+\vec{\omega}\times\overrightarrow{OC},
\end{aligned}
\end{equation}
where $\vec\gamma=(0,0,1)$ is the unit vector orthogonal to the planes $\Sigma_0$ and $\Sigma$, i.e.,
\[
\vec\gamma=\vec{\mathbf e}_3^0=\vec{\mathbf e}_3.
\]

The first three vector constraints are obtained from the condition that the velocities of the contact points $A_1, B_1, C_1$ with
the fixed plane $\Sigma_0$ are zero. The remaining ones follow
from the condition that there is no sliding between the balls and the plane $\Sigma$. This means that the velocities of points $A$, $B$ and $C$ are the same
as velocities of the corresponding points at the plane $\Sigma$.

The configuration space $Q$ is
18-dimensional and there are twelve independent nonholonomic constraints among \eqref{noncon}. Hence 
the vector subspaces of the admissible
velocities $\mathcal D_q\subset T_q Q,$ $q\in Q$, are six--dimensional.

Note that the dimensions of the configuration space $Q$ and the constraints manifold
$\mathcal D$ in the problem of spherical ball bearing for $n=3$ and the planar three balls bearing problem coincide.
Now, the additional one-side constraints read:
\begin{equation}\label{jednostrane}
\begin{aligned}
&\vert\overrightarrow{O_1O_2}\vert=\sqrt{(x_2-x_1)^2+(y_2-y_1)^2}  \ge 2r, \\
&\vert\overrightarrow{O_2O_3}\vert=\sqrt{(x_3-x_2)^2+(y_3-y_2)^2} \ge 2r, \\
&\vert\overrightarrow{O_3O_1}\vert=\sqrt{(x_1-x_3)^2+(y_1-y_3)^2}\ge 2r.
\end{aligned}
\end{equation}

\subsection{The equations of motion}
As in the case of spherical ball bearing problem, we will derive the equations of  motion in the planar case in the vector form. The corresponding reaction forces
will be expressed in terms of Lagrange multipliers $\lambda_1,\dots,\lambda_{12}$
that correspond to twelve independent nonholonomic costraints.

The equations of motion of the panar three balls bearing system relative to the fixed coordinate system are:
\begin{equation}\label{jednacine}
\begin{aligned}
m_1 \dot{\vec{v}}_{O_1}&=(\lambda_1,\lambda_2,0)+(\lambda_7,\lambda_8,0),\\
m_2\dot{\vec{v}}_{O_2}&=(\lambda_3,\lambda_4,0)+(\lambda_9,\lambda_{10},0),\\
m_3\dot{\vec{v}}_{O_3}&=(\lambda_5,\lambda_6,0)+(\lambda_{11},\lambda_{12},0),\\
m\dot{\vec{v}}_O&=-(\lambda_7,\lambda_8,0)-(\lambda_9,\lambda_{10},0)-(\lambda_{11},\lambda_{12},0),
\end{aligned}
\end{equation}
and
\begin{equation}\label{jednacine*}
\begin{aligned}
{I}_1\dot{\vec{\omega}}_1&=-r\vec{\gamma}\times((\lambda_1,\lambda_2,0)-(\lambda_7,\lambda_8,0)),\\
{I}_2\dot{\vec{\omega}}_2&=-r\vec{\gamma}\times((\lambda_3,\lambda_4,0)-(\lambda_9,\lambda_{10},0),\\
{I}_3\dot{\vec{\omega}}_3&=-r\vec{\gamma}\times((\lambda_5,\lambda_6,0)-(\lambda_{11},\lambda_{12},0)), \\
I\dot{\vec{\omega}}&=-\overrightarrow{OA}\times(\lambda_7,\lambda_8,0)-\overrightarrow{OB}\times(\lambda_9,\lambda_{10},0)-
\overrightarrow{OC}\times(\lambda_{11},\lambda_{12},0).
\end{aligned}
\end{equation}

 By differentiating the first three constraints from \eqref{noncon}, and using the first three equations of motion in \eqref{jednacine} and \eqref{jednacine*},
we get:
\begin{equation}\label{mnozioci1}
\begin{aligned}
(\lambda_1,\lambda_2, 0)&=\frac{m_1r^2-I_1}{m_1r^2+I_1}(\lambda_7,\lambda_8,0)\\
(\lambda_3,\lambda_4, 0)&=\frac{m_2r^2-I_2}{m_2r^2+I_2}(\lambda_9,\lambda_{10},0),\\
(\lambda_5,\lambda_6, 0)&=\frac{m_3r^2-I_3}{m_3r^2+I_3}(\lambda_{11},\lambda_{12},0).
\end{aligned}
\end{equation}

Therefore, the equations \eqref{jednacine}, \eqref{jednacine*} can be written as
\begin{equation}\label{jed1}
\begin{aligned}
\dot{\vec{v}}_{O_1}&=\frac{2r^2}{m_1r^2+I_1}\vec{\mathbf F}_1,\\
\dot{\vec{v}}_{O_2}&=\frac{2r^2}{m_2r^2+I_2}\vec{\mathbf F}_2,\\
\dot{\vec{v}}_{O_3}&=\frac{2r^2}{m_3r^2+I_3}\vec{\mathbf F}_3,\\
m\dot{\vec{v}}_O&=-\vec{\mathbf F}_1-\vec{\mathbf F}_2-\vec{\mathbf F}_3,\\
\end{aligned}
\end{equation}
and
\begin{equation}\label{jed2}
\begin{aligned}
\dot{\vec{\omega}}_1&=\frac{2r}{m_1r^2+I_1}\vec{\gamma}\times\vec{\mathbf F}_1,\\
\dot{\vec{\omega}}_2&=\frac{2r}{m_2r^2+I_2}\vec{\gamma}\times\vec{\mathbf F}_2,\\
\dot{\vec{\omega}}_3&=\frac{2r}{m_3r^2+I_3}\vec{\gamma}\times\vec{\mathbf F}_3,\\
I\dot{\vec{\omega}}&=-\overrightarrow{OA}\times\vec{\mathbf F}_1-\overrightarrow{OB}\times\vec{\mathbf F}_2-
\overrightarrow{OC}\times\vec{\mathbf F}_3,
\end{aligned}
\end{equation}
where
\[
\vec{\mathbf F}_1=(\lambda_7,\lambda_8,0), \qquad \vec{\mathbf F}_2=(\lambda_9,\lambda_{10},0), \qquad \vec{\mathbf F}_3=(\lambda_{11},\lambda_{12},0).
\]

By differentiating the remaining constraints, we get the following linear system of six equations in the Lagrange multipliers $\lambda_7,\dots,\lambda_{12}$:
\begin{equation}\label{mnozioci2}
\begin{aligned}
\frac{4Ir^2}{m_1r^2+I_1}\vec{\mathbf F}_1=&\big(\overrightarrow{OA}\wedge \vec{\mathbf F}_1+\overrightarrow{OB}\wedge \vec{\mathbf F}_2+\overrightarrow{OC}\wedge \vec{\mathbf F}_3\big)\overrightarrow{OA}\\
&+\vec{\omega}\times(\vec{v}_{O_1}-\vec{v}_{O})-\frac{I}{m}\big( \vec{\mathbf F}_1+\vec{\mathbf F}_2+\vec{\mathbf F}_3\big),\\
\frac{4Ir^2}{m_2r^2+I_2}\vec{\mathbf F}_2=
&\big(\overrightarrow{OA}\wedge \vec{\mathbf F}_1+\overrightarrow{OB}\wedge \vec{\mathbf F}_2+\overrightarrow{OC}\wedge \vec{\mathbf F}_3\big)\overrightarrow{OB}\\
&+\vec{\omega}\times(\vec{v}_{O_2}-\vec{v}_{O})-\frac{I}{m}\big( \vec{\mathbf F}_1+\vec{\mathbf F}_2+\vec{\mathbf F}_3\big),\\
\frac{4Ir^2}{m_3r^2+I_3}\vec{\mathbf F}_3=
&\big(\overrightarrow{OA}\wedge \vec{\mathbf F}_1+\overrightarrow{OB}\wedge \vec{\mathbf F}_2+\overrightarrow{OC}\wedge \vec{\mathbf F}_3\big)\overrightarrow{OC}\\
&+\vec{\omega}\times(\vec{v}_{O_3}-\vec{v}_{O})-\frac{I}{m}\big( \vec{\mathbf F}_1+\vec{\mathbf F}_2+\vec{\mathbf F}_3\big).
\end{aligned}
\end{equation}
One can easily see that the system \eqref{mnozioci2} determines
the Lagrange multipliers $\lambda_7,\dots,\lambda_{12}$ uniquely, and, at the same time, uniquely determines  $\vec{\mathbf F}_1, \vec{\mathbf F}_2, \vec{\mathbf F}_3$.

We have the following analogue of Propositions \ref{prva} and \ref{go}

\begin{prop}
The moving triangles $\triangle O_1O_2O_3(t)$ and $\triangle ABC(t)$ are congruent to the triangle formed by the centers of the balls at the initial condition.
\end{prop}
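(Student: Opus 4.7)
The plan is to prove that the three pairwise distances $|O_iO_j|$ are conserved along the motion, and then to observe separately that the upper contact triangle $ABC$ is a rigid translation of $\triangle O_1O_2O_3$ by the vector $r\vec{\gamma}$. Together these two facts give the desired congruence with the initial ball-center triangle.

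To carry out the first step, I would combine the two rolling conditions for each ball in \eqref{noncon}. The lower constraint $\vec{v}_{O_i}=r\vec{\omega}_i\times\vec{\gamma}$ lets us eliminate $\vec{\omega}_i$ from the upper one, yielding
\[
2\vec{v}_{O_i}=\vec{v}_O+\vec{\omega}\times\overrightarrow{OP_i},\qquad P_i\in\{A,B,C\}.
\]
Writing $\overrightarrow{OP_i}=\overrightarrow{OO_i}+r\vec{\gamma}$ and exploiting the fact that $\vec{\omega}=\dot\varphi\,\vec{\gamma}$ in the planar setting kills the $r\vec{\gamma}$ contribution, leaving
\[
2\vec{v}_{O_i}=\vec{v}_O+\vec{\omega}\times\overrightarrow{OO_i}.
\]
Subtracting two such identities for $i$ and $j$ produces
\[
\frac{d}{dt}\overrightarrow{O_iO_j}=\vec{v}_{O_j}-\vec{v}_{O_i}=\tfrac{1}{2}\vec{\omega}\times\overrightarrow{O_iO_j},
\]
which is the equation of a rigid rotation of $\overrightarrow{O_iO_j}$ with angular velocity $\tfrac{1}{2}\vec{\omega}$. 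Consequently $\frac{d}{dt}|\overrightarrow{O_iO_j}|^2=0$, so the distances $|O_1O_2|$, $|O_2O_3|$, $|O_3O_1|$ are preserved in time and $\triangle O_1O_2O_3(t)$ is congruent to $\triangle O_1O_2O_3(0)$.

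For the triangle $ABC$, each upper contact point lies directly above the centre of the corresponding ball, $A=O_1+r\vec{\gamma}$, $B=O_2+r\vec{\gamma}$, $C=O_3+r\vec{\gamma}$, so $\overrightarrow{AB}=\overrightarrow{O_1O_2}$ and similarly for the remaining sides. Thus $\triangle ABC(t)$ is the translate of $\triangle O_1O_2O_3(t)$ by $r\vec{\gamma}$ and inherits all its side lengths, which combined with the conservation above gives congruence of both triangles with the initial configuration.

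I do not anticipate a genuine obstacle: the argument reduces to a two-line algebraic consequence of the constraints, mirroring the role played by \eqref{game1} and Proposition \ref{prva} in the spherical case. The one point worth flagging is the cancellation $\vec{\omega}\times r\vec{\gamma}=0$, which is automatic in the planar setting because the plane's angular velocity is forced by the geometry to be parallel to $\vec{\gamma}$; this is the planar counterpart of the antisymmetry $\langle\vec{\Gamma}_i\times\vec{\Omega},\vec{\Gamma}_j\rangle+\langle\vec{\Gamma}_j\times\vec{\Omega},\vec{\Gamma}_i\rangle=0$ that drove the proof of Proposition \ref{prva}.
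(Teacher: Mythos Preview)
Your proof is correct and follows essentially the same approach as the paper: both arguments add the two rolling constraints for each ball to obtain $2\vec{v}_{O_i}=\vec{v}_O+\vec{\omega}\times\overrightarrow{OP_i}$, subtract two such relations, and deduce that the side vectors evolve by a rigid rotation $\tfrac12\vec{\omega}\times(\cdot)$, hence have constant length. The only cosmetic difference is that the paper writes the computation directly for $\overrightarrow{AB},\overrightarrow{BC},\overrightarrow{CA}$, whereas you pass first to $\overrightarrow{O_iO_j}$ via the cancellation $\vec{\omega}\times r\vec{\gamma}=0$ and then note $\overrightarrow{AB}=\overrightarrow{O_1O_2}$; since these vectors coincide, the two presentations are equivalent.
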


\begin{proof}
From the constraints \eqref{noncon} we get
\begin{equation*}
\begin{aligned}
&2\frac{d}{dt}\overrightarrow{AB}=\vec{\omega}\times\overrightarrow{AB},\\
&2\frac{d}{dt}\overrightarrow{BC}=\vec{\omega}\times\overrightarrow{BC},\\
&2\frac{d}{dt}\overrightarrow{CA}=\vec{\omega}\times\overrightarrow{CA}.
\end{aligned}
\end{equation*}

Therefore,
\[
\frac{d}{dt}\langle \overrightarrow{AB},\overrightarrow{AB}\rangle=\langle \overrightarrow{AB},\vec{\omega}\times\overrightarrow{AB}\rangle=0.
\]
Similarly, we have $\langle \overrightarrow{BC},\overrightarrow{BC}\rangle=const$, $\langle \overrightarrow{CA},\overrightarrow{CA}\rangle=const$.
\end{proof}

Thus, as in the spherical case, if the initial condition is within the interior of the region \eqref{jednostrane}, the system remains
within the interior of the region \eqref{jednostrane} along the motion.

Also, from \eqref{jed2} and $\dot\gamma=0$ we get:

\begin{prop} The projections of the angular velocities $\vec{\omega}_i$ to $\vec{\gamma}$ are conserved along the motion:
\[
\omega_{i3}=\langle \vec{\omega}_i,\vec\gamma\rangle=c_i, \qquad i=1,2,3.
\]
\end{prop}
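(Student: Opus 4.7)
The plan is to imitate the strategy used for Proposition \ref{go} in the spherical setting, but in the planar case the computation collapses dramatically because the normal direction $\vec\gamma=(0,0,1)$ is literally constant in the fixed frame, so $\dot{\vec\gamma}=0$.

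First I would differentiate $\langle \vec{\omega}_i,\vec{\gamma}\rangle$ with respect to time and use the Leibniz rule, obtaining
\[
\frac{d}{dt}\langle \vec{\omega}_i,\vec{\gamma}\rangle
=\langle \dot{\vec{\omega}}_i,\vec{\gamma}\rangle+\langle \vec{\omega}_i,\dot{\vec{\gamma}}\rangle.
\]
The second term vanishes identically because $\vec\gamma=\vec{\mathbf e}_3^0$ is a unit vector of the fixed frame.

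Next I would substitute $\dot{\vec{\omega}}_i$ from the $i$-th of the equations \eqref{jed2}, namely
\[
\dot{\vec{\omega}}_i=\frac{2r}{m_ir^2+I_i}\,\vec{\gamma}\times\vec{\mathbf F}_i,
\]
and observe that $\vec{\gamma}\times\vec{\mathbf F}_i$ is orthogonal to $\vec{\gamma}$. Consequently the first term also vanishes:
\[
\langle \dot{\vec{\omega}}_i,\vec{\gamma}\rangle
=\frac{2r}{m_ir^2+I_i}\,\langle \vec{\gamma}\times\vec{\mathbf F}_i,\vec{\gamma}\rangle=0,
\]
which yields $\langle\vec{\omega}_i,\vec\gamma\rangle=c_i=\text{const}$ for $i=1,2,3$.

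There is no real obstacle here; the only subtle point is to notice that one does not need any information about the reaction forces $\vec{\mathbf F}_i$ beyond the fact that the torque driving $\dot{\vec{\omega}}_i$ in \eqref{jed2} is a cross product with $\vec\gamma$. Compared with the spherical case, where $\vec\Gamma_i$ moves and one must combine the dynamical equation \eqref{jednacine1} with the kinematic equation \eqref{game}, here the proof is a one-line consequence of $\dot{\vec\gamma}=0$ together with the structure of the torque equation.
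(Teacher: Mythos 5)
Your proof is correct and follows exactly the argument the paper intends: the authors introduce the proposition with the words ``from \eqref{jed2} and $\dot{\vec\gamma}=0$ we get,'' which is precisely your observation that $\dot{\vec{\omega}}_i$ is a multiple of $\vec{\gamma}\times\vec{\mathbf F}_i$ and hence orthogonal to the constant vector $\vec\gamma$. Nothing is missing; your write-up simply spells out the one-line computation the paper leaves implicit.
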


\subsection{Reduction}

Set $v_\varphi=\dot\varphi$, $v_x=\dot x$, $v_y=\dot y$.
By using the constraints \eqref{noncon} we can obtain a closed system of the equations of motion on the space
\[
\mathcal P= T\R^2\times TSO(2)\times (\R^2)^3\{v_x,v_y,v_\varphi, x,y, \varphi, x_1,y_1,x_2,y_2,x_3,y_3\}, \qquad \dim\mathcal P=12.
\]

Note that we have a diffeomorphism
\[
\mathcal D/SO(3)\times SO(3)\times SO(3) \cong \mathcal P\times\R^3\{\omega_{13},\omega_{23},\omega_{33}\},
\]
where the $SO(3)\times SO(3)\times SO(3)$--action on $\mathcal D\subset TQ$, as in the spherical case, is given by the right trivialisation of the tangent bundle
of the Lie group $SO(3)\times SO(3)\times SO(3)$.

It is interesting that, contrary to the spherical case, the equations on $\mathcal P$ do not depend on the integrals $c_i$.
By using the constraints \eqref{noncon}, the kinetic energy for $c_1=c_2=c_3=0$ on $\mathcal P$ takes the form:
\begin{align*}
T= \frac12 Iv_\varphi^2+ \frac12 m\big(v_x^2+v_y^2\big)&+\frac12 (\frac{I_1+r^2m_1}{4r^2}) \langle \vec{v}_O+\vec{\omega}\times\overrightarrow{OA},\vec{v}_O+\vec{\omega}\times\overrightarrow{OA}\rangle\\
&+\frac12 (\frac{I_2+r^2m_2}{4r^2}) \langle \vec{v}_O+\vec{\omega}\times\overrightarrow{OB},\vec{v}_O+\vec{\omega}\times\overrightarrow{OB}\rangle\\
&+\frac12 (\frac{I_3+r^2m_3}{4r^2}) \langle \vec{v}_O+\vec{\omega}\times\overrightarrow{OC},\vec{v}_O+\vec{\omega}\times\overrightarrow{OC}\rangle.
\end{align*}

Let us denote
\begin{equation}\label{oznake}
\begin{aligned}
\vec N=&\delta_1\overrightarrow{OA}+\delta_2\overrightarrow{OB}+\delta_3\overrightarrow{OC}, \\
     M=&\delta_1\langle\overrightarrow{OA},\overrightarrow{OA}\rangle+
\delta_2\langle\overrightarrow{OB},\overrightarrow{OB}\rangle+\delta_3\langle\overrightarrow{OC},\overrightarrow{OC}\rangle,\\
\delta_1=&\frac{m_1r^2+I_1}{4r^2}, \quad  \delta_1=\frac{m_2r^2+I_2}{4r^2}, \quad \delta_3=\frac{m_3r^2+I_3}{4r^2}, \quad \delta=\delta_1+\delta_2+\delta_3.
\end{aligned}
\end{equation}

Note that
$\vec N(t)$ determines the trajectory of the mass centre $S(t)$ of the moving triangle $\triangle ABC(t)$ with masses $\delta_1, \delta_2, \delta_3$
placed at the vertices $A, B, C$: $\overrightarrow{OS}=\frac{1}{\delta}\vec N$.
Also, by definition, $\vec N$ and $M$ satisfy the inequality
$$
\delta M \ge \langle \vec N,\vec N\rangle=N_1^2+N_2^2.
$$
The equality would imply that the points $A, B, C$ coincide. Therefore, in the region of admissible motions \eqref{jednostrane} we have
\[
\delta M > N_1^2+N_2^2.
\]

With the above notation, the formula for the kinetic energy simplifies to
\[
T=\frac12 (I+M) v_\varphi^2+ \frac12 (m+\delta)(v_x^2+v_y^2)+v_\varphi(N_1 v_y-N_2 v_x).
\]

As in the problem of spherical ball bearing, we will derive the equations of motion in the planar case without calculating the explicit formulae for $\vec{\mathbf F}_1, \vec{\mathbf F}_2, \vec{\mathbf F}_3$.

\begin{thm}\label{Glavna2}
The equations of motion of the planar three balls bearing problem on $\mathcal P$ are given by
\begin{equation}\label{kinematicke}
\begin{aligned}
&\dot \varphi=v_\varphi, \qquad \dot x=v_x, \qquad \dot y=v_y, \\
&2(\dot x_1,\dot y_1)=(v_x,v_y)+(-v_\varphi(y_1-y),v_\varphi(x_1-x)),\\
&2(\dot x_2,\dot y_2)=(v_x,v_y)+(-v_\varphi(y_2-y),v_\varphi(x_2-x)),\\
&2(\dot x_3,\dot y_3)=(v_x,v_y)+(-v_\varphi(y_3-y),v_\varphi(x_3-x)),
\end{aligned}
\end{equation}
and
\begin{equation}\label{trikugle}
\begin{aligned}
(m+\delta)\dot v_x=&\frac12 N_1 v_\varphi^2-\frac\delta2 v_\varphi v_y+N_2\dot{v}_\varphi,\\
(m+\delta)\dot v_y=&\frac12 N_2 v_\varphi^2+\frac\delta2 v_\varphi v_x-N_1\dot{v}_\varphi,\\
\big(I+M\big)\dot{v}_\varphi=&\frac12 v_\varphi(N_1 v_x+N_2 v_y)+N_2 \dot v_x-N_1\dot v_y,
\end{aligned}
\end{equation}
where $\vec N$, $M$, 
${\delta}$ are given by \eqref{oznake}.
\end{thm}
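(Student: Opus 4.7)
The plan is to derive the kinematic equations \eqref{kinematicke} directly from the constraints \eqref{noncon}, and then obtain the dynamic equations \eqref{trikugle} by eliminating the reaction forces $\vec{\mathbf F}_1,\vec{\mathbf F}_2,\vec{\mathbf F}_3$ from the momentum equations \eqref{jed1}--\eqref{jed2}, which have already been established.

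For the kinematic part, the equations $\dot\varphi=v_\varphi$, $\dot x=v_x$, $\dot y=v_y$ are just the definitions of the velocities. For each $i$, adding the two constraints in \eqref{noncon} that involve $\vec v_{O_i}$ causes the $\pm r\vec\omega_i\times\vec\gamma$ terms to cancel and yields
\[
2\vec v_{O_i}=\vec v_O+\vec\omega\times \vec a_i,
\]
where $\vec a_1=\overrightarrow{OA}$, $\vec a_2=\overrightarrow{OB}$, $\vec a_3=\overrightarrow{OC}$. Since $\vec\omega=(0,0,v_\varphi)$, $\vec v_O=(v_x,v_y,0)$, and $\vec a_i=(x_i-x,y_i-y,0)$, expanding the cross products produces the last three equations of \eqref{kinematicke}.

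For the dynamic part, the first three equations of \eqref{jed1} are equivalent to $\vec{\mathbf F}_i=2\delta_i\dot{\vec v}_{O_i}$, with $\delta_i$ as in \eqref{oznake}. Differentiating $2\vec v_{O_i}=\vec v_O+\vec\omega\times\vec a_i$ and using $\dot{\vec a}_i=\vec v_{O_i}-\vec v_O=\tfrac12(\vec\omega\times\vec a_i-\vec v_O)$, together with $\vec\omega\times(\vec\omega\times\vec a_i)=-v_\varphi^2\vec a_i$ (since $\vec a_i\perp\vec\gamma$), gives
\[
2\dot{\vec v}_{O_i}=\dot{\vec v}_O+\dot{\vec\omega}\times\vec a_i-\tfrac12 v_\varphi^2\vec a_i-\tfrac12\vec\omega\times\vec v_O,
\]
whence $\vec{\mathbf F}_i=\delta_i\bigl[\dot{\vec v}_O+\dot{\vec\omega}\times\vec a_i-\tfrac12 v_\varphi^2\vec a_i-\tfrac12\vec\omega\times\vec v_O\bigr]$. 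Summing over $i$ with $\sum\delta_i=\delta$, $\sum\delta_i\vec a_i=\vec N$, and inserting the result into the linear momentum balance $m\dot{\vec v}_O=-\sum\vec{\mathbf F}_i$ (the fourth equation of \eqref{jed1}), we obtain
\[
(m+\delta)\dot{\vec v}_O=\tfrac12 v_\varphi^2\vec N-\dot{\vec\omega}\times\vec N+\tfrac{\delta}{2}\vec\omega\times\vec v_O,
\]
whose $x$- and $y$-components reproduce the first two equations of \eqref{trikugle}. For the angular balance $I\dot{\vec\omega}=-\sum\vec a_i\times\vec{\mathbf F}_i$ from \eqref{jed2}, the same substitution combined with the identities $\vec a_i\times(\dot{\vec\omega}\times\vec a_i)=|\vec a_i|^2\dot{\vec\omega}$, $\vec a_i\times\vec a_i=0$, and $\vec a_i\times(\vec\omega\times\vec v_O)=\vec\omega\langle\vec a_i,\vec v_O\rangle$ (all valid because $\vec a_i$ is horizontal while $\vec\omega,\dot{\vec\omega}$ are vertical) leads, using $\sum\delta_i|\vec a_i|^2=M$, to
\[
(I+M)\dot{\vec\omega}=-\vec N\times\dot{\vec v}_O+\tfrac12\vec\omega\langle\vec N,\vec v_O\rangle,
\]
whose $z$-component is the third equation of \eqref{trikugle}.

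The main obstacle is the organized book-keeping of triple products in the angular equation and the careful exploitation of the orthogonality between the horizontal vectors $\vec a_i,\vec v_O$ and the vertical vectors $\vec\omega,\dot{\vec\omega}$; once this is handled, every cross term collapses cleanly into the quantities $\vec N$, $M$, and $\langle\vec N,\vec v_O\rangle$, and the final identification with \eqref{trikugle} is immediate.
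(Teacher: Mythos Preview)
Your proof is correct and follows the same core strategy as the paper: derive the kinematic equations by adding the two constraints for each ball, and derive the dynamic equations by using $\vec{\mathbf F}_i=2\delta_i\dot{\vec v}_{O_i}$ together with $2\vec v_{O_i}=\vec v_O+\vec\omega\times\vec a_i$ to eliminate the reaction forces from the linear and angular momentum balances for the plane $\Sigma$. The only organizational difference is that the paper first packages the balances as the conservation-type identities $\tfrac{d}{dt}\big((m+\delta)\vec v_O+\vec\omega\times\vec N\big)=0$ and $\tfrac{d}{dt}\big(I\vec\omega+\vec N\times\vec v_O+M\vec\omega\big)+\vec\omega\langle\vec v_O,\vec N\rangle=0$ and then expands them using separately computed formulas for $\dot{\vec N}$ and $\dot M$, whereas you expand $2\dot{\vec v}_{O_i}$ at the outset and sum directly; your route is slightly more direct, while the paper's intermediate form makes the momentum integrals $f_1,f_2$ of Theorem~\ref{mera2} visible along the way.
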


An explicit form of the equations \eqref{trikugle} is given below in \eqref{REDtrikugle}.

\begin{proof}
The kinematic equations \eqref{kinematicke} follow directly from the constraints \eqref{noncon}.

From the last equations in \eqref{jed1} and \eqref{jed2}, we have:
\begin{equation}
(m\dot v_x,m\dot v_y, I\dot v_\varphi)=-\vec{\mathbf F}_1-\vec{\mathbf F}_2-\vec{\mathbf F}_3
-\overrightarrow{OA}\times\vec{\mathbf F}_1-\overrightarrow{OB}\times\vec{\mathbf F}_2-
\overrightarrow{OC}\times\vec{\mathbf F}_3,
\end{equation}
where $\vec{\mathbf F}_1, \vec{\mathbf F}_2, \vec{\mathbf F}_3$, from
\eqref{mnozioci2} and \eqref{kinematicke}, are written in terms of variables on $\mathcal P$.

Then, from \eqref{jed1} and  \eqref{noncon}, we get
\begin{equation*}
\begin{aligned}
&\vec{\mathbf F}_1=2\delta_1 \dot{\vec v}_{O_1}=\delta_1\frac{d}{dt}\big(\vec{v}_O+\vec{\omega}\times\overrightarrow{OA}\big),\\
&\vec{\mathbf F}_2=2\delta_2 \dot{\vec v}_{O_2}=\delta_2\frac{d}{dt}\big(\vec{v}_O+\vec{\omega}\times\overrightarrow{OB}\big), \\
&\vec{\mathbf F}_3=2\delta_1 \dot{\vec v}_{O_3}=\delta_3\frac{d}{dt}\big(\vec{v}_O+\vec{\omega}\times\overrightarrow{OC}\big).
\end{aligned}
\end{equation*}

Thus, the last equation in \eqref{jed1} can be rewritten as
\begin{equation}\label{dotVo}
\frac{d}{dt}\Big((m+\delta)\vec{v}_O+ \vec{\omega}\times \vec N \Big)=0, \qquad \delta=\delta_1+\delta_2+\delta_3.
\end{equation}

On the other hand, we have
\begin{align*}
\overrightarrow{OA}\times\vec{\mathbf F}_1=&\delta_1\overrightarrow{OA}\times\frac{d}{dt}\big(\vec{v}_O+\vec{\omega}\times\overrightarrow{OA}\big)\\
=&\frac{d}{dt}\big(\delta_1\overrightarrow{OA}\times \big(\vec{v}_O+\vec{\omega}\times\overrightarrow{OA}\big)\big)-\delta_1\big({\vec v}_{O_1}-{\vec v}_{O}\big)
\times \big(\vec{v}_O+\vec{\omega}\times\overrightarrow{OA}\big)\\
=&\frac{d}{dt}\big(\delta_1\overrightarrow{OA}\times \big(\vec{v}_O+\vec{\omega}\times\overrightarrow{OA}\big)\big)
-\frac{\delta_1}2\big(\vec{\omega}\times\overrightarrow{OA}-{\vec v}_{O}\big)
\times \big(\vec{v}_O+\vec{\omega}\times\overrightarrow{OA}\big)\\
=&\frac{d}{dt}\big(\delta_1\overrightarrow{OA}\times \vec{v}_O + \delta_1\overrightarrow{OA}\times\big(\vec{\omega}\times\overrightarrow{OA}\big)\big)
+{\vec v}_{O}\times \big(\vec{\omega}\times \delta_1\overrightarrow{OA}\big)\\
=&\frac{d}{dt}\big(\delta_1\overrightarrow{OA}\times \vec{v}_O + \delta_1\vec{\omega} \langle\overrightarrow{OA},\overrightarrow{OA}\rangle\big)
+\vec{\omega}\langle {\vec v}_{O}  ,  \delta_1 \overrightarrow{OA}\rangle.
\end{align*}
Similar equations hold for $\overrightarrow{OB}\times\vec{\mathbf F}_2$ and
$\overrightarrow{OC}\times\vec{\mathbf F}_3$. Therefore, the last equation in \eqref{jed2} takes the form
\begin{equation}\label{dotW}
\frac{d}{dt}\Big(I{\vec{\omega}}+\vec N\times \vec{v}_O+M\vec{\omega}\Big)+\vec{\omega}\langle {\vec v}_{O}  , \vec N \rangle=0.
\end{equation}

The time derivatives of  $M$ and $\vec N$ along the motion are given by
\begin{equation*}
\begin{aligned}
\dot M=&2\delta_1\langle\overrightarrow{OA},\vec{v}_{O_1}-\vec{v}_O\rangle+
2\delta_2\langle\overrightarrow{OB},\vec{v}_{O_2}-\vec{v}_O\rangle+2\delta_3\langle\overrightarrow{OC},\vec{v}_{O_3}-\vec{v}_O\rangle\\
=&\delta_1\langle\overrightarrow{OA},\vec{\omega}\times\overrightarrow{OA}-{\vec v}_{O}\rangle+
\delta_2\langle\overrightarrow{OB},\vec{\omega}\times\overrightarrow{OB}-{\vec v}_{O}\rangle+\delta_3\langle\overrightarrow{OC},\vec{\omega}\times\overrightarrow{OC}-{\vec v}_{O}\rangle\\
=& - \langle {\vec v}_{O}  , \vec N \rangle,\\
\dot{\vec N}=&
\frac{\delta_1}2\big(\vec{\omega}\times\overrightarrow{OA}-{\vec v}_{O}\big)+
\frac{\delta_2}2\big(\vec{\omega}\times\overrightarrow{OB}-{\vec v}_{O}\big)+
\frac{\delta_3}2\big(\vec{\omega}\times\overrightarrow{OC}-{\vec v}_{O}\big)\\
=&\frac12\big(\vec\omega\times \vec N-\delta\vec{v}_O\big).
\end{aligned}
\end{equation*}

Finally,  the equations \eqref{dotVo} and \eqref{dotW} can be written as:
\begin{equation*}
\begin{aligned}
(m+\delta)\dot{\vec{v}}_O+
\big(I+M\big)\dot{\vec{\omega}}=&\frac{d}{dt}\big(\vec N\times (\vec{\omega}-\vec{v}_O)\big)\\
=&\frac12\langle \vec\omega,\vec\omega\rangle \vec N+\frac12 \langle {\vec v}_{O}  , \vec N \rangle\vec\omega+\frac{\delta}2 \vec\omega \times\vec{v}_O+
\vec N \times \big(\dot{\vec\omega}-\dot{\vec{v}}_O\big),
\end{aligned}
\end{equation*}
which proves \eqref{trikugle}.
\end{proof}

\subsection{Invariant measure}
It is clear that we can pass from $\mathcal P$ to the space
\begin{equation}\label{eq:Q}
\mathcal Q=\{(v_x,v_y,v_\varphi,N_1,N_2,M)\in \R^6\,\vert\, \delta M > N_1^2+N_2^2\},
\end{equation}
with the induced system described by the equations \eqref{trikugle} and
\[
\dot{\vec N}=\frac12\big(\vec\omega\times \vec N-\delta\vec{v}_O\big), \qquad \dot M=- \langle {\vec v}_{O}  , \vec N \rangle.
\]

If we introduce
\begin{align*}
&\mathbf v=(v_x,v_y,v_\varphi), \qquad
 \mathbf n=(N_1,N_2,M),\\
&\mathbf m=\frac12(N_1 v_\varphi^2-\delta v_\varphi v_y,N_2 v_\varphi^2+\delta v_\varphi v_x, v_\varphi(N_1 v_x+N_2 v_y)),\\
&\mathbb I=
\begin{pmatrix}
m+\delta & 0            & -N_2 \\
0        & m+\delta     &  N_1  \\
-N_2     & N_1          &  I+M
\end{pmatrix},\qquad
\mathbb J=-\frac 12
\begin{pmatrix}
\delta & 0 &      N_2\\
0     & \delta & -N_1\\
  2N_1 & 2N_2 & 0
\end{pmatrix},
\end{align*}
then the reduced equations of motion on $\mathcal Q$ \eqref{eq:Q} become
\begin{equation}\label{REDtrikugle}
\dot{\mathbf v}=\mathbb I^{-1}\mathbf m, \qquad \dot{\mathbf n}=\mathbb J \mathbf v.
\end{equation}

\begin{remark}
Since
\[
\det (\mathbb I)=(m+\delta)\big((m+\delta)I + m M+ (\delta M-(N_1^2+N_2^2) \big)> 0\vert_\mathcal Q,
\]
the matrix $\mathbb I$ is invertible on $\mathcal Q$ and we have
\[
\mathbb I^{-1}=\frac{1}{\det(\mathbb I)}
\begin{pmatrix}
(m+\delta)(I+M)-N_1^2 & -N_1 N_2            & (m+\delta)N_2 \\
-N_1N_2        &  (m+\delta)(I+M)-N_2^2  &-(m+\delta)N_1  \\
(m+\delta)N_2     & -(m+\delta)N_1          &  (m+\delta)^2
\end{pmatrix}.
\]
\end{remark}

\begin{thm}\label{mera2}
The equations \eqref{REDtrikugle} have the following first integrals
\begin{equation}\label{integraliRED}
\begin{aligned}
& f_1=(m+\delta) v_x- v_\varphi N_2, \\
& f_2=(m+\delta) v_y + v_\varphi N_1, \\
& f_3=\delta M-(N_1^2+N_2^2),\\
& f_4=T=\frac12 (I+M) v_\varphi^2+ \frac12 (m+\delta)(v_x^2+v_y^2)+v_\varphi(N_1 v_y-N_2 v_x).
\end{aligned}
\end{equation}
and they possess an invariant measure
\[
\sqrt{\det(\mathbb I)}\, dv_x \wedge dv_y \wedge dv_\varphi \wedge dN_1 \wedge dN_2 \wedge dM.
\]
The system \eqref{REDtrikugle} can be solved by quadratures.
\end{thm}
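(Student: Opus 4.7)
The plan is to prove the theorem in three steps: verifying the four conservation laws, checking Liouville's condition for the proposed density $\mu=\sqrt{\det\mathbb I}$, and then applying the Euler--Jacobi last-multiplier theorem.

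First, I would differentiate each $f_k$ along \eqref{REDtrikugle}. For $f_1$ and $f_2$, substituting $\dot v_x,\dot v_y$ from \eqref{trikugle} together with $\dot N_1=-\tfrac12(\delta v_x+v_\varphi N_2)$ and $\dot N_2=\tfrac12(v_\varphi N_1-\delta v_y)$ (both derived in the proof of Theorem~\ref{Glavna2}) yields immediate pairwise cancellation; conceptually, $f_1$ and $f_2$ are the horizontal components of translational momentum. For $f_3$ the most transparent route is geometric: since $\vec N=\delta\,\overrightarrow{OS}$ with $S$ the weighted centroid of $\triangle ABC$, the parallel-axis theorem gives
\[
\delta M-(N_1^2+N_2^2)=\delta\sum_{i=1}^{3}\delta_i|\overrightarrow{SO_i}|^2,
\]
which depends only on the shape of $\triangle ABC$; by the proposition above on the congruence of $\triangle ABC$ along the motion, this shape is preserved, so $f_3$ is constant. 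Conservation of $f_4=T$ is the energy integral (scleronomic constraints, no external forces), and can also be checked by contracting \eqref{trikugle} against $\mathbf v$ and using the expressions for $\dot{\vec N}$ and $\dot M$.

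Second, for the invariant measure I would adapt the template used in the proof of Theorem~\ref{mera}. Liouville's condition $\dot\mu+\mu\,\mathrm{div}(X)=0$ on $\mathcal Q$ splits into blocks: in the $\mathbf n$-variables each $\dot n_i$ is visibly independent of $n_i$, so that block of the divergence vanishes; in the $\mathbf v$-variables, since $\mathbb I^{-1}$ depends only on $\mathbf n$, one has $\mathrm{tr}(\partial\dot{\mathbf v}/\partial\mathbf v)=\mathrm{tr}(\mathbb I^{-1}\partial_{\mathbf v}\mathbf m)$. By the Jacobi formula, $\dot\mu/\mu=\tfrac12\mathrm{tr}(\mathbb I^{-1}\dot{\mathbb I})$, so the assertion reduces to
\[
\mathrm{tr}\bigl(\mathbb I^{-1}\bigl(\tfrac12\dot{\mathbb I}+\partial_{\mathbf v}\mathbf m\bigr)\bigr)=0.
\]
This trace identity is the heart of the proof. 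A direct expansion of $A=\tfrac12\dot{\mathbb I}+\partial_{\mathbf v}\mathbf m$ shows that its diagonal entries all vanish and its symmetric off-diagonal combinations satisfy $A_{12}+A_{21}=0$, $A_{13}+A_{31}=N_1 v_\varphi$, and $A_{23}+A_{32}=N_2 v_\varphi$. Pairing these with the entries $(\mathbb I^{-1})_{13}=(m+\delta)N_2/\det\mathbb I$ and $(\mathbb I^{-1})_{23}=-(m+\delta)N_1/\det\mathbb I$ from the explicit inverse in the preceding remark yields exact cancellation.

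Finally, integrability by quadratures is an immediate application of the Euler--Jacobi last-multiplier theorem on the six-dimensional phase space $\mathcal Q$: the invariant measure together with four independent first integrals $f_1,\dots,f_4$ is precisely what is required. Functional independence is routine from the Jacobian at a generic point, the quadratic dependence of $f_4$ and the linear couplings in $f_1,f_2$ making the rank-four condition easy to verify. The hardest part of the argument is the trace identity in step two: although every individual derivative is elementary, the cancellation depends on the matching between the off-diagonal sign pattern $(N_2,-N_1)$ in $\mathbb I^{-1}$ and the combination $(N_1 v_\varphi,N_2 v_\varphi)$ produced by $A$, so the sign-bookkeeping has to be done with some care.
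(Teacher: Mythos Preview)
Your proof is correct, and for the invariant-measure step you take a genuinely different route from the paper. The paper does \emph{not} verify Liouville's condition directly on the six-dimensional space $\mathcal Q$; instead it first restricts to a common level set $\mathcal Q_d=\{f_1=d_1,\,f_2=d_2,\,f_3=d_3\}$, writes down the resulting three-dimensional closed system in $(v_\varphi,N_1,N_2)$, and checks $\dot\mu+\mu\,\dv(X)=0$ there by two short one-line computations of $\dv(X)$ and $\frac{d}{dt}\det(\mathbb I)$. The passage from the three-dimensional to the six-dimensional invariant measure is then handled by the same foliation argument used in the proof of Theorem~\ref{mera}.

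Your approach is closer in spirit to the spherical proof itself: you stay on $\mathcal Q$ and reduce everything to the single trace identity $\tr\bigl(\mathbb I^{-1}(\tfrac12\dot{\mathbb I}+\partial_{\mathbf v}\mathbf m)\bigr)=0$, which you then verify entrywise. This is clean and self-contained, and your observation that the diagonal of $A$ vanishes while the symmetric off-diagonal part pairs with the $(N_2,-N_1)$ pattern of $\mathbb I^{-1}$ is exactly the cancellation that makes it work. The paper's route, by contrast, buys something extra: the explicit reduced system on $\R^3\{v_\varphi,N_1,N_2\}$ it derives is what is used in the subsequent remark to integrate the case $d_1=d_2=0$ in trigonometric functions. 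Your geometric argument for $f_3$ via the parallel-axis theorem and the congruence of $\triangle ABC$ is also nicer than the paper's, which simply asserts that the conservation laws are ``clear''. One small slip: in your parallel-axis formula you write $\overrightarrow{SO_i}$ where you mean the vectors from $S$ to the vertices $A,B,C$ of the contact triangle (the horizontal projections coincide, so the conclusion is unaffected).
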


\begin{proof}
It is clear that the functions \eqref{integraliRED} are the first integrals of the system \eqref{REDtrikugle}.
Note that $f_3>0$ on $\mathcal Q$.
Next, at the invariant level set
\[
\mathcal Q_d\colon \qquad f_1=d_1,\qquad f_2=d_2, \qquad f_3=d_3,
\]
we have
\begin{align*}
&v_x=\frac{v_\varphi N_2+d_1}{m+\delta}, \qquad
 v_y=\frac{-v_\varphi N_1+d_2}{m+\delta},\qquad M=\frac{1}{\delta}(N_1^2+N_2^2)+\frac{d_3}{\delta},\\
&\det (\mathbb I)=(m+\delta)\big((m+\delta)I + \frac{m}{\delta}(N_1^2+N_2^2)+ \frac{m d_3}{\delta}+d_3 \big).
\end{align*}
We obtain a closed system in the space $\R^3\{v_\varphi,N_1,N_2\}$ given by
\begin{equation}\label{trikugle*}
\begin{aligned}
\dot{v}_\varphi=&\frac{m v_\varphi(N_1 {d_1}+N_2 {d_2})}{2\det(\mathbb I)},\\
\dot N_1
=&-\frac{m+2\delta}{2(m+\delta)}N_2 v_{\varphi}-\frac{\delta d_1}{2(m+\delta)},\\
\dot N_2
=&\frac{m+2\delta}{2(m+\delta)}N_1 v_{\varphi}-\frac{\delta d_2}{2(m+\delta)}.
\end{aligned}
\end{equation}

Using the similar arguments as in the proof Theorem \ref{mera},
it is sufficient to prove that $\mu\vert_{\mathcal Q_d}$ is the density of an invariant measure of the
reduced system \eqref{trikugle*}.

Let $X=(\dot v_\varphi,\dot N_1,\dot N_2)$. Then
\[
\dv(X)=\frac{m(N_1d_1+N_2d_2)}{2\det(\mathbb I)}.
\]
On the other hand
\[
\frac{d}{dt}\det(\mathbb I)=-m(N_1d_1+N_2d_2).
\]

Therefore, the function $\mu=\sqrt{\det(\mathbb I)}$ satisfies the equation
\[
\dot \mu+\mu\dv(X)=0,
\]
and the system \eqref{trikugle*} preserves the measure $\mu\, dv_\varphi \wedge dN_1 \wedge dN_2$. Integrability in quadratures follows according the Euler-Jacobi theorem \cite{AKN}.
\end{proof}

\begin{remark}
By setting $d_1=d_2=0$,  we get that
\[
v_\varphi=const \qquad \text{and} \qquad N_1^2+N_2^2=const.
\]
Thus, the equations
\eqref{trikugle*} can be solved in terms of trigonometric functions.
\end{remark}

\subsection*{Acknowledgements}
We are very grateful to the referees for valuable remarks that helped us to improve the exposition.
The research which led to this paper was initiated during the GDIS conference in Summer 2018, when all three authors visited Moscow Institute of Physics and Technology, kindly invited and hosted by Professor Alexey V. Borisov and his team.
This research has been supported by the Project no. 7744592 MEGIC "Integrability and Extremal Problems in Mechanics, Geometry and Combinatorics" of the Science Fund of Serbia,
Mathematical Institute of the Serbian Academy of Sciences and Arts and the Ministry for Education, Science, and Technological Development of Serbia, and the Simons Foundation grant no. 854861.

\end{document}